\documentclass[11pt]{scrartcl}
\usepackage[utf8]{inputenc}
\usepackage{cmbright}
\usepackage[usenames,dvipsnames]{xcolor}
\usepackage{graphicx,longtable,url,rotating}
\usepackage{float}
\usepackage{wrapfig}
\usepackage{amsmath}
\usepackage{amsfonts}
\usepackage{amssymb}
\usepackage{stmaryrd}
\usepackage{subfig}
\usepackage{algpseudocode}
\usepackage{alltt}
\usepackage{tikz}
\usepackage{mathtools}
\definecolor{lightcolor}{gray}{.55}
\definecolor{shadecolor}{gray}{.95}
\usepackage{bbm}
\usepackage{algorithm}
\usepackage{adjustbox}
\usepackage{changes}
\usepackage{truncate}
\definechangesauthor[name={Christophe}, color=orange]{xtof}
\usepackage{hyperref}
\hypersetup{colorlinks=true,pagebackref=true,urlcolor=orange,citecolor=brown}
\author{Morgan André and Christophe Pouzat}

\DeclareMathAlphabet{\pazocal}{OMS}{zplm}{m}{n}
\DeclareRobustCommand{\rchi}{{\mathpalette\irchi\relax}}
\newcommand{\irchi}[2]{\raisebox{\depth}{$#1\chi$}}

\usetikzlibrary{calc, positioning, automata, chains, backgrounds, arrows.meta}
\tikzstyle{vertex}=[circle, draw, inner sep=0pt, fill=black, minimum size=8pt]
\tikzstyle{sqvertex}=[diamond, draw, inner sep=0pt, fill=black, minimum size=9pt]
\usetikzlibrary{shapes}

\makeatletter
\newcommand*\bigcdot{\mathpalette\bigcdot@{.5}}
\newcommand*\bigcdot@[2]{\mathbin{\vcenter{\hbox{\scalebox{#2}{$\m@th#1\bullet$}}}}}
\makeatother

\newtheorem{theorem}{\indent Theorem}[section]
\newtheorem{proposition}[theorem]{\indent Proposition}

\newtheorem{lemma}[theorem]{\indent Lemma}

\newtheorem{definition-theorem}[theorem]{\indent Definition-Theorem}

\newenvironment{proof}{\paragraph{Proof:}}{\hfill$\square$}

\def \R{\mathbb{R}}

\def \P{\mathbb{P}}

\title{A Quasi-Stationary Approach to Metastability in a System of Spiking Neurons with Synaptic Plasticity}

\author{M. André$^1$ \and C. Pouzat$^2$}
\hypersetup{
 pdfauthor={Morgan André and Christophe Pouzat},
 pdftitle={Metastability in a System of Spiking Neurons with Synaptic Plasticity},
 pdfkeywords={},
 pdfsubject={},
 pdfcreator={}, 
 pdflang={English}}

\begin{document}
\maketitle

\section*{Affiliations}

$^{1}$ Instituto de Matemática e Estatística - USP, University of São Paulo \\
$^{2}$ Strasbourg University and CNRS UMR 7501\\

\noindent\textbf{Correspondance should be addressed to}:\\
C. Pouzat\\
IRMA\\
7 rue René-Descartes\\
67084 Strasbourg Cedex\\
France\\
christophe.pouzat@math.unistra.fr\\

\section*{Abstract}
After reviewing the behavioral studies of working memory and of the cellular substrate of the latter, we argue that metastable states constitute candidates for the type of transient information storage required by working memory. We then present a simple neural network model made of stochastic units whose synapses exhibit short-term facilitation. The Markov process dynamics of this model was specifically designed to be analytically tractable, simple to simulate numerically and to exhibit a quasi-stationary distribution (QSD). Since the state space is finite this QSD is also a Yaglom limit, which allows us to bridge the gap between quasi-stationarity and metastability by considering the relative orders of magnitude of the relaxation and absorption times. We present first analytical results: characterization of the absorbing region of the Markov process, irreducibility outside this absorbing region and consequently existence and uniqueness of a QSD. We then apply Perron-Frobenius spectral analysis to  obtain any specific QSD, and design an approximate method for the first moments of this QSD when the exact method is intractable. Finally we use these methods to study the relaxation time toward the QSD and establish numerically the memorylessness of the time of extinction.

\pagebreak
\section*{Acknowledgments}
This article was produced as part of the activities of FAPESP Research,
Innovation and Dissemination Center for Neuromathematics (grant number
2013/07699-0, S.Paulo Research Foundation, Brazil), and Morgan André was
supported by FAPESP, Brazil scholarships (grant number  2020/12708-1). Christophe Pouzat was supported by the ANR project Simbadnesticost (ANR-22-CE45-0027-01). Eva Löcherbach saw that we were dealing with a quasi-stationary distribution and put us on the right track. We thank our two anonymous referees for gently pushing us to refine our analytical and numerical arguments. 
Finally and very sadly, this article is dedicated to the memory of our friend and colleague Antonio Galves. Antonio was in many ways---through his contributions on metastability and his constant drive for simple (but not too simple) models amenable to analytical treatment---at the origin of the present work.
\pagebreak
\section{Introduction}
\subsection{The neurobiological horizon}

As described by Fuster in the introduction of his 1973 article \cite{fuster:73}:
\emph{A delayed-response trial typically consists of the presentation of one of two possible visual cues, an ensuing period of enforced delay and, at the end of it, a choice of motor response in accord with the cue. The temporal separation between cue and response is the principal element making the delayed response procedure a test of an operationally defined short-term memory function.}
In that article Fuster described, in the monkey prefrontal cortex, neurons that switch between no activity and a \emph{sustained activity at constant rate during the delay period} when the animal had to perform a delayed-response task. He showed moreover---using distracting stimuli that interrupted the sustained activity of these neurons---that the monkey errors at the end of the delay period were positively correlated with the interruption of sustained activity. Since then, many experimental investigations reviewed in \cite[Chap. VII]{fuster:15} and \cite{constantinidis.ea:18} have confirmed this basic finding and showed that some of the ``sustained activity neurons'' are insensitive to the type of cue (color, shape, location, sound) and seem to encode the ``abstract'' notion of remembering ``any'' cue until the expiration of a delay, while others, especially outside the prefrontal cortex, are sensitive to the type of cue. These ``sustained activity neurons'' are relatively easy to record from, \emph{implying that they are fairly abundant} \cite{leavitt.ea:17}. This sustained activity has been intriguing modelers for a long time, leading them to explore first network models with subgroups of strongly reciprocally coupled excitatory neurons   \cite{zipser.kehoe.ea:93,amit.brunel:97}. The sustained activity has then been
interpreted as a local attractor of some dynamical system (reviewed in \cite{wang:01}). Stability issues when the transiently memorized item is a continuous quantity--like an angle--, led modelers to include some ``slow'' and ``use dependent'' coupling, initially in the form of NMDA receptors \cite{compte.ea:00,wang:01}--for a review of basic neurophysiology, see \cite{luo:15}. But \cite{wang.ea:06} described a subclass of pyramidal (and therefore excitatory) cells in the prefrontal cortex that are strongly interconnected and whose synapses are unusual, since they exhibit a marked short-term facilitation--synapses between neurons of this type usually exhibit short-term depression. This has lead to several studies giving a more or less central role to short-term facilitation in sustained activity generation or stabilization, \emph{e.g.} \cite{barak.tsodyks:07,itskov.ea:11,hansel.mato:13}--or even proposing a working memory mechanism without sustained activity \cite{mongillo.ea:08}--, reviewed in \cite{barak.tsodyks:14}. But the secondary status of the ``noise'' in these studies, where variability comes into play mostly at the neurons input level, is at odd with basic empirical observations. It is indeed well known \cite{defelice:81,Yarom_2011,luo:15} that neurons depend on ion channels that are randomly going back and forth between closed and opened states both for the action potential generation \cite{verveen.derkesen:68} and the synaptic transmission \cite{katz.miledi:70}; that (chemical) synaptic transmission involves the release of a variable number of transmitter packets / quanta \cite{Fatt_1952,del_Castillo_1954}, giving rise to the rather noisy membrane voltage trajectories that are actually observed. These considerations strongly suggest an alternative model construction strategy: working with stochastic units / neurons instead of deterministic ones. Continuing and simplifying \cite{galves.locherbach.ea:19}, we therefore developed a minimal model of the sub-network of reciprocally coupled pyramidal cells with facilitating synapses \cite{wang.ea:06}; a model that is both amenable to analytical solutions and that can be easily simulated. This model is made of stochastic neurons that accumulate their inputs until a threshold is reached. The synapses between the neuron exhibit short-term facilitation enabling the sub-network to exhibit a transient \emph{memoryless} sustained activity--that is, to reach a quasi-stationary distribution leading to genuine metastability--, reminiscent of what is observed in working memory experiments.

\subsection{Metastability and Quasi-stationarity}

The expressions ``metastability" and ``quasi-stationary distribution" used in the previous section have proper mathematical meanings, which deserve explanations. The notion of metastability on one hand has a long history, and appears in a wide variety of fields to describe various apparently unrelated phenomenons, from nuclear physics to the study of avalanches and super-cooling water. More recently it has gained popularity in the field of neuroscience, as in many respects the brain seems to exhibit metastable-like behaviors. In the present article the notion of metastability is to be understood as in the rigorous characterization introduced in \cite{cassandro} for interacting particle systems, which requires that, for a Markov process evolving in a space which includes an absorbing state: \textit{(i)} the time to reach the absorbing state from any other state is memoryless (i.e. follows an exponential distribution), asymptotically with respect to the number of particles in the system; \textit{(ii)} before reaching this absorbing state (the only real equilibrium), the system exhibits \textit{thermalization}, i.e. an apparent stabilization, temporary but long, in a region of the state space away from the actual equilibrium. The notion of quasi-stationary distribution (QSD) on the other hand refers to the stationary distribution of a modified Markov process which has been conditioned to stay away from its absorbing state (see \cite{iosifescu:07,meleard:2012} for an introduction). Since here we consider a system which is essentially irreducible and evolves in a finite state space, the QSD is unique and corresponds to the Yaglom limit, that is, the unique limit distribution to which the system conditioned on non-absorption relaxes, starting from any given state of the irreducibility region. Moreover it is well-known that, in this case, if the initial state is distributed with respect to the QSD then the time of extinction is indeed memoryless. In the sequel we argue that the thermalization referred to above (that is, the metastable phase) can be understood through the theory of quasi-stationarity --- a possibility which, quite surprisingly, seems to have been mostly ignored in the literature, with the notable exceptions of \cite{bianchi:2016} and \cite{huisinga:2004}, in specific settings which are not applicable in our case. This approach gives us analytical tools to quantitatively characterize the metastable phase, as well as a pretty straightforward (but not yet rigorous) way of establishing the memorylessness evoked in point \textit{(i)}.

\subsection{Overview}

The present paper is organized as follows. In Section \ref{sec:definition} we give the basic definition of our model: first from the modeling viewpoint; then in a somewhat more mathematical form, as a continuous-time Markov chain. In Section \ref{sec:cutting} we partition the state space into relevant sub-regions; in particular we give an explicit characterization of the absorbing region, and of the support of the (yet to be proven) metastable phase; irreducibility is also established. In Section \ref{sec:QSD}, we introduce briefly the relevant elements of the theory of quasi-stationarity distributions; the key role of the Perron-Frobenius spectral analysis is emphasized. In Section \ref{sec:nsmall}, these ideas are implemented for a simple example with $5$ neurons, which allows us to compare numerically the relative orders of magnitude of the relaxation toward the QSD and then toward the absorbing region, establishing the memorylessness of the time of extinction. Since this exact method becomes rapidly intractable, we then study numerically the extinction time in Section \ref{sec:nbig}. An heuristic calculation that provides an approximate alternative to the spectral approach is presented in Section \ref{sec:approx}.

\section{Definition of the model}
\label{sec:definition}

We start by defining informally a stochastic system of interacting spiking neurons --- a formal definition will be given in the next section. The system consists in a finite set of \(N\) neurons. A \emph{membrane potential process}, denoted \((U_t(i))_{t \geq 0}\), taking values in the set of non-negative integers is associated to every neuron \(i \in \{1,\ldots N\}\). The spiking activity of the neurons depends on a threshold value \(\theta \in \mathbb{Z}^+\). When \(U_t(i) < \theta\) neuron \(i\) cannot spike, and we say that it is \emph{quiescent}, while when \(U_t(i) \geq \theta\) we say that neuron \(i\) is \emph{active}: it spikes at rate \(\beta\) --- i.e. it waits a time \(\Delta_t\) distributed as an exponential random variable of parameter \(\beta\) and spikes at time \(t + \Delta_t\). Every neuron is connected to every other neuron of the network with a uniform \emph{synaptic strength} and the effect of a spike depends on the state of the synapse between the spiking neuron and the other neurons of the network at that time. At any time the synapse can be either \emph{facilitated} or \emph{not facilitated}, meaning that if a spike occurs, it will or won't be "transmitted" to the other neurons. The facilitation state of the synapse of neuron \(i\) is denoted \((F_t(i))_{t \geq 0}\); it is a stochastic process taking value in \(\{0,1\}\). Whenever \(F_t(i) = 1\) we say that the synapse of neuron \(i\) is \emph{facilitated} at time \(t\). The facilitated synapse looses its facilitation  (i.e. goes back to the unfacilitated state \(F_{t+\Delta_t}(i) = 0\)) at a given rate \(\lambda\). Now the fact that a given neuron \(j\) spikes at time \(t\) means the following: \textit{(i)} its membrane potential is \emph{reset to} \(0\), \(U_{t^+}(j) = 0\); \textit{(ii)} its synapse becomes facilitated if it wasn't already, \(F_{t^+}(j)=1\); \textit{(iii)} if its synapse was facilitated at the time of the spike, the spike is said to be \emph{efficient} and the membrane potential process of \emph{all the other neurons} in the system increases by one unit, while nothing happens to these neurons if the synapse was not facilitated at the moment of the spike,  \(U_{t^+}(i) = U_{t^-}(i) + F_{t^-}(j)\), for all \(i \neq j\); in the latter case the spike is said to be \emph{inefficient}. All exponential random variables involved are assumed to be independent. Then a trajectory of the system is entirely characterized by the family of interacting stochastic processes $(U_t(i),F_t(i))_{t \geq 0}$ (for \(i \in \{1,\ldots N\}\)).

\subsection{Formal definition as a continuous-time Markov chain}
\label{sec:formal-definition-as-a-continuous-time-markov-chain}
It is clear that although the membrane potential any neuron $i$ can get arbitrarily large in the above model formulation, from the dynamics viewpoint all that matters is to know whether \(U_t(i) < \theta\) or not. We can therefore consider that there are only \(\theta+1\) relevant membrane potential states: \(U_t(i) = 0\), \(U_t(i) = 1\), \(\ldots\), \(U_t(i) = \theta-1\) and \(U_t(i) \ge \theta\). The effective state space of a neuron if therefore finite with \(2 (\theta+1)\) elements: \[\Phi \equiv \{0,\ldots,\theta\} \times \{0,1\}\,, \] where the set \(\{0,1\}\) corresponds to the synaptic facilitation. Figure \ref{fig:single-neuron-chain} illustrates the effective states accessible to an arbitrary neuron \(i \in \{1,\ldots,N\}\) of the network as well as as the possible transitions among those states. The state of any neuron of the network can therefore be fully specified by placing a token on one of the spaces (circles/squares, by analogy to the board of the game of the goose) of Fig. \ref{fig:single-neuron-chain} and the state of the whole network can be represented by placing N tokens labelled 1 to N on the \(2 (\theta+1)\) spaces. The state of the network is then specified by selecting a single element of the set: \(\rchi = \Phi^N\). We adopt the following notation: for any $x \in \rchi$ we write $x=(x_1, \ldots x_N)$, with $x_i = (x_i^U,x_i^F)$ for any $i \in \{1, \ldots N\}$, where $x_i^U \in \{0, \ldots \theta\}$ corresponds to the value of the membrane potential for neuron $i$ while $x_i^F$ corresponds to the facilitation state. Notice that the state space size increases very quickly with $N$ and $\theta$:
\begin{itemize}
\item for $N=5$ and $\theta=1$, the size is, $4^5=1024$,
\item for $N=50$ and $\theta=10$, it becomes roughly $2.2^{43}\, \mathcal{N}_A$, where $\mathcal{N}_A$ is Avogadro's constant.
\end{itemize}

\begin{center}
	\begin{figure}[H]
		\centering
		\includegraphics[width=0.5\textwidth]{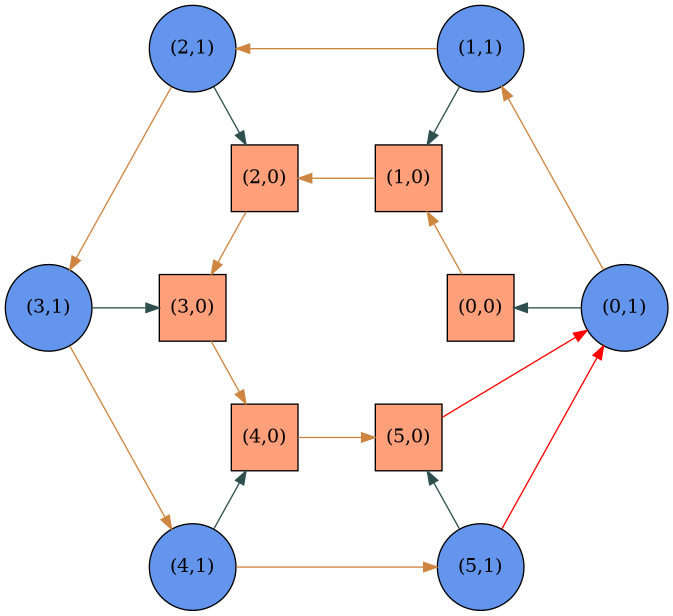}
		\caption{\label{fig:single-neuron-chain}An illustration where \(\theta = 5\). Each node represents a possible membrane potential value, the first element of the pair (being understood that "5" should be interpreted as "\(\ge 5\)") and a synaptic facilitation value in the second element of the pair. If the neuron has it synapse facilitated it sits in the outer circle (light blue circle), if its synapse is not facilitated it sits in the inner circle (light orange squares). The transition rates are encoded by the color: brown is the rate of "effective spikes" generated by the network (spike occurring while the synapse of the spiking neuron is still facilitated), dark green is \(\lambda\) and red is \(\beta\).}
	\end{figure}
\end{center}

Formally we're considering a continuous time Markov process $(X_t)_{t \geq 0}$ taking value in the finite state space $\rchi$. Its dynamic can be characterized by an infinitesimal generator $Q = (q_{x,y})_{x,y \in \rchi}$, which is defined by the following requisites. Let $x,y \in \rchi$ such that $x \neq y$ and let $i \in \{1, \ldots N\}$, then\footnote{Here and in the sequel of this article $\delta$ denotes the Kronecker delta function.}:

\begin{itemize}
	\item \textbf{Loss of facilitation :} Suppose $y$ is such that $y_j = x_j$ for any $j \neq i$ and $y_i = (x^U_i,0)$ then $q_{x,y} = \lambda \delta_1(x_i^F)$.
	\item \textbf{Inefficient spike :} Suppose $y$ is such that $y_j = x_j$ for any $j \neq i$ and $y_i = (0,1)$, then $q_{x,y} = \beta \delta_{(\theta,0)}(x_i)$.
	\item \textbf{Efficient spike :} Suppose $y$ is such that $y_j = \left(\min(\theta,x^U_j + 1), x^F_j\right)$ for any $j \neq i$ and $y_i = (0,1)$, then $q_{x,y} = \beta \delta_{(\theta,1)}(x_i)$.
\end{itemize}

Moreover for any $y \neq x$ not considered above $q_{x,y} = 0$ and of course $q_{x,x} = - \sum_{y \neq x} q_{x,y}$.
The key features of this model dynamics are illustrated next with simulations.

\subsection{Basic dynamics features}
\label{sec:basic-dynamics-features}
\subsubsection{Process trajectories}
\label{sec:first-process-trajectoriers}
Simulation details are provided in Se.~\ref{sec:simulations}. Figure \ref{fig:MPP1} shows the trajectories of the ``membrane potentials'', $u_t(i)$ a realization of $U_t(i)$, and the synaptic facilitation, $f_t(i)$ a realization of $F_t(i)$, of all the (50) neurons of a simulation ($N=50$, $\theta=5$, $\beta=10$, $\lambda=6.7$) during one time unit. 
\begin{figure}[H]
  \begin{center}
    \includegraphics[width=.7\textwidth]{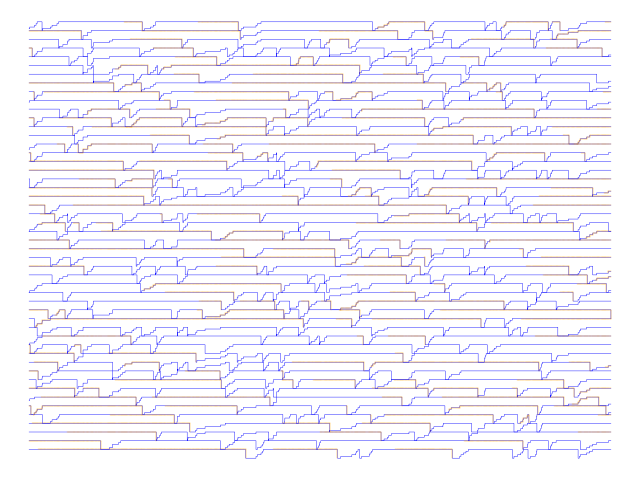}
  \end{center}
  \caption{\label{fig:MPP1}Trajectories between time units 1 and 2 of the membrane potentials of the fifty neurons of a simulated network. The traces are blue when the synapse of the neuron is facilitated and orange otherwise.}            
\end{figure}

\emph{This figure displays the complete state of the network}. Notice that at any given time, most of the neurons are in the susceptible state (their membrane potential is $\ge \theta$). Notice also that the membrane potentials of the neurons that have not yet reached $\theta$ evolve in parallel. Spike are emitted when the membrane potential of one neuron goes from $\theta$ to zero, this is the only way the membrane potential can decrease.

A finer time display is proposed on Fig.~\ref{fig:MPP1zoom}. 

\begin{figure}[H]
  \begin{center}
    \includegraphics[width=.6\linewidth]{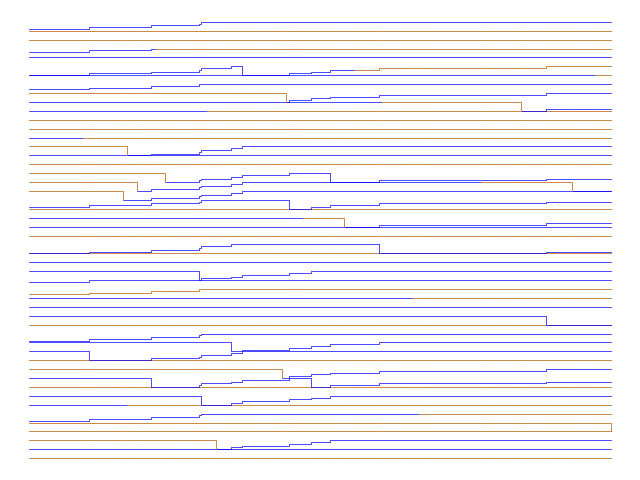}
  \end{center}
  \caption{\label{fig:MPP1zoom}Enlarge display between times 1.20 and 1.25 of the data shown of Fig.~\ref{fig:MPP1}.}
\end{figure}
The features of the model are clearly visible:
\begin{itemize}
\item When a neuron with an un-facilitated synapse spikes (the trace is \textcolor{orange}{orange} when the membrane potential is at the threshold level just before dropping to 0):
  \begin{itemize}
  \item its synapse gets facilitated (the trace turns \textcolor{blue}{blue}) immediately after the spike,
    \item the membrane potential of all the other neurons remains the same.
    \end{itemize} 
  \item When a neuron with a facilitated synapse spikes (the trace is \textcolor{blue}{blue} when the membrane potential is at the threshold level just before dropping to 0):
  \begin{itemize}
  \item its synapse remains facilitated (the trace stays \textcolor{blue}{blue}) immediately after the spike,
    \item the membrane potential of \emph{all} the other neurons that are below threshold increases by 1.
    \end{itemize} 
\end{itemize}      

\subsubsection{It is the same but it is not the same}

We now turn to the key property our model was designed to exhibit. The next two figures (\ref{fig:raster-simA} and \ref{fig:raster-simB}) show spike trains displayed as raster plots (every spike is represented by a dot) of the same network of 50 neurons started from the same initial state but using two different sequences of (pseudo) random numbers. Fig.~\ref{fig:raster-simA} shows an abrupt disappearance of the activity (after 13 time units). 
\begin{figure}[H]
  \begin{center}
    \includegraphics[width=.6\textwidth]{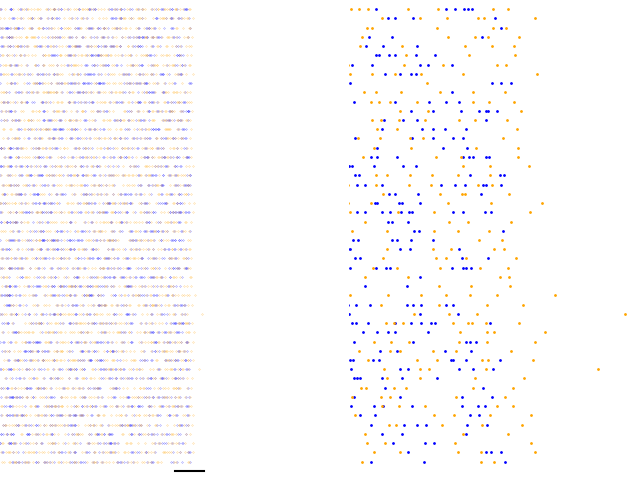}
  \end{center}
  \caption{\label{fig:raster-simA}Raster plots of a $50$ neurons network, with $\lambda = 6.7$, $\beta = 10$ and $\theta = 5$. Left, from time 0 to 14; right from time 12 to 14. Dots are blue when the synapse is facilitated and orange otherwise}
\end{figure}

The dots color is blue when the synapse is facilitated and orange otherwise. We see on the right side of Fig.~\ref{fig:raster-simA} that the last spikes occurring before the ``network death'' are all with an un-facilitated synapse. Fig.~\ref{fig:raster-simB} shows the same network as Fig.~\ref{fig:raster-simA}, starting from the same state and remaining active for whole simulation (50 time units).

\begin{figure}[H]
  \begin{center}
    \includegraphics[width=.58\textwidth]{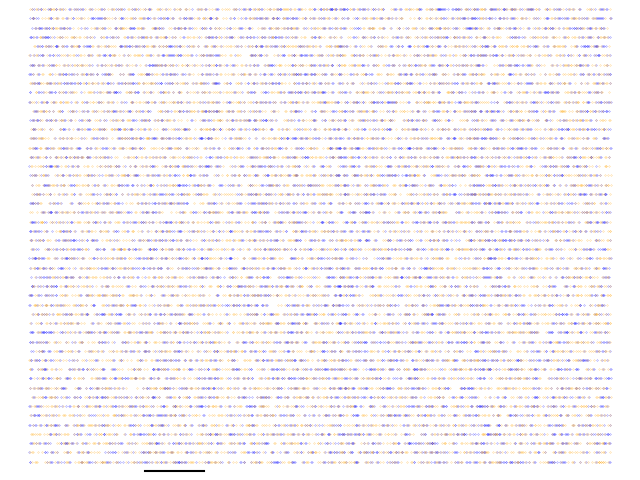}
  \end{center}
  \caption{\label{fig:raster-simB}Same as Fig.~\ref{fig:raster-simA} but different random numbers sequence. The scale bar is drawn between time 10 and time 15.}
\end{figure}

Judging from the dots pattern, the activity looks regular with a constant ratio of blue dots over orange dots. But a better way to graphically asses the network activity (network spiking frequency) is provided by the observed counting process (a step function that increases by one every time an event occurs) as shown on Fig.~\ref{fig:simA-simB-early-CP-plots} for the two simulations of Fig.~\ref{fig:raster-simA} and \ref{fig:raster-simB}.
\begin{figure}[H]
  \begin{center}
    \includegraphics[width=.7\textwidth]{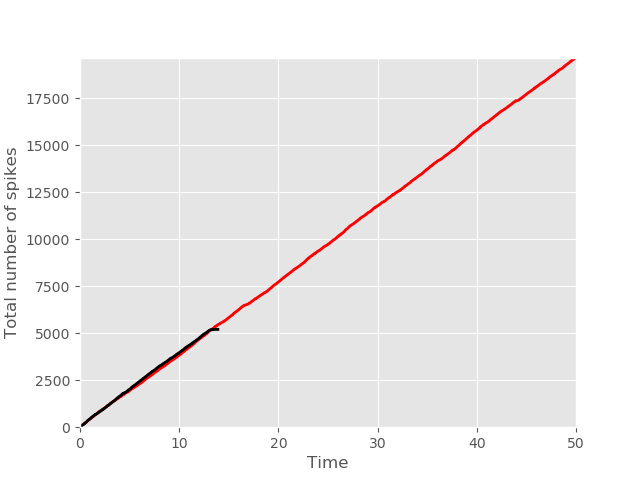}
  \end{center}
  \caption{\label{fig:simA-simB-early-CP-plots}Observed counting processes for the simulations of Fig.~\ref{fig:raster-simA} (black) and Fig.~\ref{fig:raster-simB} (red).} 
\end{figure}
Extracting the slope by eye, we see that the network generates roughly 375 events per time unit (before it reaches the quiescent state in the case of the black trace). \emph{We have, qualitatively at least, the behavior we are interested in: the activity seems ``stationary'' until it abruptly vanishes}.

\subsection{The aggregated process}
\label{sec:aggregated-process}
Notice now that the dynamic of our system is invariant with respect to the permutation of the neurons: they are all equivalent members of the network. We are moreover interested in the network state, as opposed to the individual neuron states. For these reasons we might focus on the number of neurons in each possible state at each time. In order to do this, for any $x \in \rchi$ and any $(i,j) \in \{0, \ldots \theta\} \times \{0,1\}$ we define $$z^{(i,j)}(x) = \sum_{k=1}^N \delta_i(x^U_k)\delta_j(x^F_k).$$

Then $z^{(i,j)}$ is simply a function from $\rchi$ to $\{0, \ldots N\}$ counting the number of neurons having membrane potential $i$ and facilitation state $j$. For any $(i,j) \in \{0, \ldots \theta\} \times \{0,1\}$ and any time $t \geq 0$ we might now define the stochastic version of these counting functions: $$ Z_{t}(i,j) = z^{(i,j)} (X_t).$$

Furthermore, for any $t \geq 0$ we write $Z_t = \left(Z_{t}(i,j)\right)_{(i,j) \in \{0, \ldots \theta\} \times \{0,1\}}$, and the resulting process $(Z_t)_{t \geq 0}$ is a continuous-time Markov chain taking value in $S = \{0, \ldots, N\}^{\{0, \ldots, \theta\} \times \{0,1\}}$. For any $z \in S$ we adopt the following notation: $z_{i,j}$ denotes the value in $\{0, \ldots, N\}$ corresponding to the coordinates $i \in \{0, \ldots \theta\}$ and $j \in \{0,1\}$, while $z_i$ denotes the ordered pair $(z_{i,0},z_{i,1})$. Moreover for any $i \in \{0, \ldots \theta\}$ we write $$z_{i, \bullet} = z_{i,0} + z_{i,1}.$$ 

In the sequel, if a state $z$ is such that $z_{i,\bullet} = k$ we adopt the terminology of saying that there are $k$ neurons at \emph{level} $i$. Furthermore notice that because the number of neurons in the system is fixed to $N$ we may narrow down the state space a bit; the range of the process (the actual state space) is $R \subset S$, defined as: $$R = \left\{z \in S: \sum_{i=0}^\theta z_{i,\bullet} = N\right\}.$$
The size of this state space is much smaller than the one we started from, since instead of having $\left(2\,(\theta + 1)\right)^N$ it has ``only'' \cite[Sec. II.5, p. 38]{feller:68}: $\binom{N + 2 \theta + 1}{2 \theta + 1}$  elements. Compared to the previous considered cases we get:
\begin{itemize}
\item for $N=5$ and $\theta=1$, the size is, $56$,
\item for $N=50$ and $\theta=10$, it becomes roughly $5.5 \times 10^{17}$.
\end{itemize}

It is easy to obtain the dynamic of the process $(Z_t)_{t \geq 0}$ from the definition of $(X_t)_{t \geq 0}$. We write this dynamic explicitly in term of maps from $R$ to $R$ corresponding to the three possible events susceptible to affect the system. The map corresponding to efficient and inefficient spikes will be denoted respectively $\pi$ and $\pi^*$, and the map corresponding to a loss of facilitation on a neuron at level $i$ will be denoted $\pi^\dagger_i$. Suppose the current state is $z \in R$.

\begin{itemize}
	\item \textbf{Loss of facilitation:} When it happens at level $i \in \{0, \ldots \theta\}$ the number of un-facilitated neurons at level $i$ increases by $1$, and the number of facilitated neurons decreases by $1$. The map $\pi^\dagger_i$ is therefore defined by:
	
	$$\left(\pi^\dagger_i(z)\right)_j = \left\{ 
	\begin{array}{ll}
		(z_{i,0} + 1,z_{i,1} - 1) \text{ if } j=i,\\
		(z_{j,0},z_{j,1}) \text{ otherwise.}\\
	\end{array}
	\right. $$ 
	
	\item \textbf{Inefficient spike:} This event leads to a decrease of the non-facilitated neurons at level $\theta$ by one and an increase of the number of facilitated neurons at level $0$ by one (red arrow from square $(5,0)$ to circle $(0,1)$ in Fig. \ref{fig:single-neuron-chain}). The map $\pi^*$ is therefore defined by: 
	
	$$\Big(\pi^*(z)\Big)_j = \left\{ 
	\begin{array}{ll}
		(z_{\theta,0} - 1, \ z_{\theta,1}) \text{ if } j=\theta,\\
		z_j \text{ if } j \in \{1, \ldots \theta-1\},\\
		(z_{\theta,0}, \ z_{\theta,1} + 1) \text{ if } j=0.\\
	\end{array}
	\right.$$ 
	
	\item \textbf{Efficient spike:} This event leads to a decrease of the number of facilitated neurons at level $\theta$ by one \emph{and} an increase by $z_{\theta-1,1}$ (when the spike comes from a neuron with a facilitated synapse, all neurons get their membrane potential increased by one except the one that spiked, whose membrane potential drops to 0), while the number of non-facilitated neurons at level $\theta$ increases by $z_{\theta-1,0}$. The number of facilitated neurons at level $0$ is set to $1$, and the number of non-facilitated neurons at level $0$ is set to $0$. Moreover for $i \in \{1, \ldots, \theta-1\}$, the number of facilitated (resp. non-facilitated) neurons at level $i$ is set to $z_{i - 1,1}$ (resp. $z_{i - 1,0}$). On Fig. \ref{fig:single-neuron-chain}, the contents of all the circles and squares rotate by one step counter clockwise, except for element \((0,0)\) that becomes 0, element \((0,1)\) that becomes 1 and elements \((5,0)\) that adds the content of element \((4,0)\) to its own and element \((5,1)\) that also adds the content of \((4,1)\) to its own and decrease by 1. The map $\pi$ is defined by:

	$$\Big(\pi(z)\Big)_j = \left\{ 
	\begin{array}{ll}
		(z_{\theta,0} + z_{\theta-1,0}, \ z_{\theta,1} - 1 + z_{\theta-1,1}) \text{ if } j=\theta,\\
		(z_{j-1,0}, \ z_{j-1,1}) \text{ if } j \in \{1, \ldots \theta-1\},\\
		(0, 1) \text{ if } j=0.\\
	\end{array}
	\right.$$ 
\end{itemize}

Then the infinitesimal dynamic of the process $(Z_t)_{t \geq 0}$ is given by the following: starting from some state $z \in R$, inefficient spikes occur at rate $\beta z_{\theta,0}$, efficient spikes occurs at rate $\beta z_{\theta,1}$, and losses of facilitation at level $i \in \{0, \ldots \theta\}$ occur at rate $\lambda z_{i,1}$. A slightly less formal but perhaps more intuitive way of describing the above dynamics follows, writing $z$ the network state at time $t$ in a matrix form (we show here the transpose in order to save space):
\[z^T \equiv \begin{bmatrix}
  z_{0,0} & z_{1,0} & \ldots & z_{i,0} & \ldots & z_{\theta,0} \\
  z_{0,1} & z_{1,1} & \ldots & z_{i,1} & \ldots & z_{\theta,1} 
\end{bmatrix}
\]
A \emph{loss of facilitation} can occur leading to (changes appear in red and the occurrence rate appears above the arrow):
\[z^T \, \overset{\textcolor{orange}{\lambda z_{i,1}}}{\rightarrow} \,
  \begin{bmatrix}
    z_{0,0} & z_{1,0} & \ldots & z_{i,0}\textcolor{red}{+ 1} & \ldots & z_{\theta,0} \\
    z_{0,1} & z_{1,1} & \ldots & z_{i,1}\textcolor{red}{- 1} & \ldots & z_{\theta,1} 
  \end{bmatrix}
\]  
Alternatively, an \emph{inefficient spike} will lead to:
\[z^T  \, \overset{\textcolor{blue}{\beta z_{\theta,0}}}{\rightarrow} \,
  \begin{bmatrix}
    z_{0,0} & z_{1,0} & \ldots & z_{i,0} & \ldots & z_{\theta,0} \textcolor{red}{- 1}\\
    z_{0,1} \textcolor{red}{+ 1} & z_{1,1} & \ldots & z_{i,1} & \ldots & z_{\theta,1} 
  \end{bmatrix}
\]
While an \emph{efficient spike} will give:
\[z^T  \, \overset{\textcolor{blue}{\beta z_{\theta,1}}}{\rightarrow} \,
  \begin{bmatrix}
    \textcolor{red}{0} & z_{0,0} & \ldots & z_{\theta,0} \textcolor{red}{+ z_{\theta-1,0}}\\
    \textcolor{red}{1} & z_{0,1}  &  \ldots & z_{\theta,1} \textcolor{red}{+ z_{\theta-1,1}-1}
  \end{bmatrix}
\]

In the sequel this will be expressed alternatively as an infinitesimal generator $Q$ in matrix form, or an infinitesimal generator $\mathcal{L}$ in functional form (in Section \ref{sec:approx}), depending on what is more suitable for the current purpose.

\section{Cutting the state space into pieces}

\label{sec:cutting}

In the next section we will be interested in studying the quasi-stationary distribution of our system, that is the stationary distribution associated with the Markov process obtained when conditioning $(Z_t)_{t \geq 0}$ on non-absorption. In order to ensure that this actually makes sense we have two concerns: \textit{(i)} we would like to properly define the absorbing region; \textit{(ii)} prove that the process restricted to the complement of this absorbing region is irreducible. Once this is established, the existence and uniqueness of the quasi-stationary distributions follow from classical results (see \cite{darroch}).

\subsection{Absorbing region}

First notice that there is an obvious absorbing set of states for $(Z_t)_{t \geq 0}$, which are the states $D \subset R$ defined by $D = \{z \in R: z_{\theta,\bullet} = 0 \text{ and } \sum_{i=0}^{\theta-1} z_{i,1} = 0\}$, that is the states with no neuron at level $\theta$ and no facilitated neurons. Then neither a spike, nor a facilitation loss can happen and the process stays there for eternity.

Nonetheless we would be short-sighted if we stopped there, and took this set $D$ to be the absorbing region. It is clear that one can find other states which, while not being properly absorbing, can only lead to $D$ in a bounded number of steps with probability one. We write $A_\theta = \{z \in R: z_{\theta,1} = 0\}$. Notice that $D \subset A_\theta$, moreover if $(Z_t)_{t \geq 0}$ reaches some $a'\in A_\theta$ at some point, then after some inefficient spikes and after the facilitated neurons loose their facilitation one after the other (which is the only thing that can happen) the process hits $D$, and again stays there for eternity. This happens in a finite number of transitions --- at most $2N$ of them actually, which corresponds to the case in which $a'_{\theta,0} = N$.  More generally, for any $i \in \{1, \ldots \theta\}$ define: $$A_i = \left\{z \in R: \sum_{j=i}^\theta z_{j,1} \leq \theta - i\right\}.$$
Notice that, for $i=\theta$, this agrees with the previous definition. Furthermore it is easy to see that if at some point the process $(Z_t)_{t \geq 0}$ reaches $A_i$ for some $i \in \{1, \ldots, \theta\}$, then with probability one it then reaches $A_\theta$ after a maximum of $\theta-i$ efficient spikes, and again reaches $D$ after some inefficient spikes and facilitation losses.

For the case $i=0$ we need to be a little bit more careful, since when non-facilitated neurons at level $\theta$ spike the number of facilitated neurons at level $0$ increases, which in turn could prevent the process from being absorbed. Therefore, we define $$A_0 = \left\{z \in R: z_{\theta,0} + \sum_{j=0}^\theta z_{j,1} \leq \theta\right\}.$$ Again it is easy to see that, if $(Z_t)_{t \geq 0}$ reaches $A_0$ at some point it then hits $A_\theta$ in a maximum of $\theta$ spikes and then gets absorbed. Finally we define the absorbing region to be: $$A = \bigcup_{i=0}^\theta A_i.$$

We claim that this is the correct definition of the absorbing region, meaning that:

\begin{enumerate}
\item once $A$ has been reached the system can't get out of it, and eventually gets definitively absorbed in $D$, which happens with probability one in a bounded number of steps\footnote{For example one can easily see from the considerations above that the number of transition to reach $D$ from any point in $A$ has to be less than $\theta +2N$, even if it is certainly not the sharpest possible bound.},
\item and if the system is in $A^c$ then it has a positive probability of staying out of $A$ for an arbitrarily big number of steps.
\end{enumerate}

The first point is simply a reformulation of the discussion above. The second point is a consequence of the important fact that, whenever the process starts outside $A$, there is always a positive probability of having an arbitrarily big number of (efficient) spikes in any time interval of arbitrary length,  with no facilitation loss in the meantime. We'll see in the next section that we can actually go further than that, and show that once the state space has been restricted a little bit, then when starting in $A^c$ any other state from $A^c$ can be reached.
\vspace{0.5 cm}

\subsection{Irreducibility}
\label{sec:irreducibility}
One can easily see that, even when $A$ has been taken out of the state space $R$, the process $(Z_t)_{t \geq 0}$ isn't properly irreducible. Consider for example the class $R' \subset R$ of states defined by $$R' = \left\{z \in R : z_{i,\bullet} = 0 \text{ for some } i \in \{0, \ldots, \theta-1\} \right\}.$$ 

Let $z \in R' \backslash A$ and suppose that $(Z_t)_{t \geq 0}$ starts from $z$. First suppose that $z_{0,\bullet} = 0$. Notice that if no spike occurs in the future, then all neurons loose their facilitation and soon the system reaches $A$. Otherwise, after the first spike the system reaches some state $z'$ which is such that $z'_{0,\bullet} \geq 1$, and then this inequality remains true for eternity, as the only way for the number of neurons at level $0$ to decrease is if there is an efficient spike, moving neurons from level $0$ to level $1$, but in this case the number of neurons at level $0$ is immediately reset to $1$ (because the membrane potential of the spiking neuron is reset to $0$). Now if we suppose that $z_{0,\bullet} \geq 1$ but $z_{1,\bullet} = 0$, then at the instant of the first efficient spike the system reaches a state $z'$ such that $z'_{1,\bullet} \geq 1$, and again, the only way the number of neurons at level $1$ can then be affected is when there is an efficient spike, which push the value at level $0$ to level $1$, so that it can never be less than $1$ (because we already know that at level $0$, there can't be $0$ neuron anymore). If both $z_{0,\bullet}$ and $z_{1,\bullet}$ are equal to $0$, then it takes at maximum two spikes (the second one being efficient) to obtain the same result. More generally it is easy to see how the same argument applies recursively to prove the following statement.

\begin{lemma}	
	let $z \in R' \backslash A$ and let $(Z^z_t)_{t \geq 0}$ be the process starting from $z$. Then either $(Z^z_t)_{t \geq 0}$ is absorbed in $A$ before reaching $R \backslash R'$, or it reaches $R \backslash R'$ and then never hits $R'$ again. In the second case $R \backslash R'$ is reached after a maximum of $\theta-1$ efficient spikes.
\end{lemma}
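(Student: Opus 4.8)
The plan is to follow, along a trajectory of $(Z_t)$, the length of the longest initial segment $\{0,1,\dots,k\}$ of membrane-potential levels that are \emph{all} occupied. For $k\in\{0,1,\dots,\theta-1\}$ set
\[
B_k=\bigl\{w\in R:\ w_{j,\bullet}\ge 1\ \text{for every }j\in\{0,\dots,k\}\bigr\},
\]
so that $B_0\supseteq B_1\supseteq\cdots\supseteq B_{\theta-1}$ and, directly from the definition of $R'$, $B_{\theta-1}=R\setminus R'$. I would first prove two closure facts and then assemble the lemma. \textbf{(i)} Each $B_k$ is invariant under the dynamics of $(Z_t)$, i.e.\ $Z_s\in B_k\Rightarrow Z_t\in B_k$ for all $t\ge s$. \textbf{(ii)} If $w\in B_k\setminus B_{k+1}$ with $k\le\theta-2$ (levels $0,\dots,k$ occupied, level $k+1$ empty), then the state obtained from $w$ after one efficient spike lies in $B_{k+1}$.

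Both follow by inspecting the three transition maps against the definition of $B_k$. The facilitation-loss map $\pi^\dagger_i$ leaves every count $w_{j,\bullet}$ unchanged, so it preserves every $B_k$. The inefficient-spike map $\pi^*$ only decreases $w_{\theta,\bullet}$ and increases $w_{0,\bullet}$, leaving $w_{1,\bullet},\dots,w_{\theta-1,\bullet}$ untouched, hence preserves $B_k$ for all $k\le\theta-1$. For the efficient-spike map $\pi$ the new level $0$ is $(0,1)$, of total count $1$, and for $1\le j\le\theta-1$ the new level $j$ is the old level $j-1$; thus if $w\in B_k$ then the new levels $1,\dots,k$ inherit the positive old counts at levels $0,\dots,k-1$, giving $\pi(w)\in B_k$, and if moreover $k\le\theta-2$ the new level $k+1$ is the old level $k$, of count $\ge 1$, giving $\pi(w)\in B_{k+1}$. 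This establishes (i) and (ii); in particular (i) with $k=\theta-1$ is already the ``never hits $R'$ again'' part of the statement.

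Now take $z\in R'\setminus A$. Before the first spike only facilitation losses can occur and these fix every count, so $(Z^z_t)$ stays in $R'$ up to the first spike; hence if the process enters $A$ beforehand we are in the first alternative. Otherwise a first spike occurs, and --- efficient or inefficient --- it puts a neuron at level $0$, so immediately afterwards $Z\in B_0$. From there I run an induction on the prefix index: as long as the current state lies in $B_k\setminus B_{k+1}$ with $k\le\theta-2$ and is not in $A$, the efficient-spike rate $\beta\,Z_t(\theta,1)$ is at least $\beta$ (outside $A\supseteq A_\theta=\{Z_t(\theta,1)=0\}$ one has $Z_t(\theta,1)\ge 1$), while the total jump rate is bounded above on the finite space $R$; so with probability one either the process enters $A$ first --- necessarily before leaving $R'$, since the intervening inefficient spikes and facilitation losses keep level $k+1$ empty and thus keep the state in $B_k\setminus B_{k+1}\subseteq R'$ --- or an efficient spike fires and, by (ii), moves the state into $B_{k+1}$. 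The prefix index strictly increases at each efficient spike, so after finitely many efficient spikes the state reaches $B_{\theta-1}=R\setminus R'$, and by (i) it then stays there; this is the second alternative, and tracking the index through the recursion gives the quantitative bound (at most $\theta-1$ efficient spikes counted from the moment level $0$ first becomes occupied, which is exactly the vantage point of the informal argument preceding the lemma). The one point requiring genuine care is this last bookkeeping --- checking that no non-efficient event can push the process out of the current stratum $B_k\setminus B_{k+1}$, in particular cannot let it touch $R\setminus R'$ before the decisive efficient spike --- which is precisely what the invariance of level $k+1$ under $\pi^*$ and $\pi^\dagger_i$ supplies.
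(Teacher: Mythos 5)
Your proposal is correct and takes essentially the same route as the paper, whose proof is the informal recursive argument preceding the lemma (level $0$, once occupied, stays occupied; each efficient spike then occupies the next level up, and outside $A$ efficient spikes keep occurring with rate at least $\beta$); your nested sets $B_k$ and their invariance under $\pi$, $\pi^*$, $\pi^\dagger_i$ are simply a clean formalization of that recursion. Your closing remark on the bookkeeping is also faithful to the paper: the bound of $\theta-1$ efficient spikes is to be read, as in the paragraph preceding the lemma, from the moment the first spike has populated level $0$.
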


While a recent article \cite{champagnat} has treated the existence of a quasi-stationary measure for discrete-time Markov chains in such a case, that is when the state space is not properly irreducible but consists in two successive classes, to the best of our knowledge results are still lacking in the continuous-time framework. Nonetheless, as $R'$ is always left after a maximum of only $\theta-1$ efficient spikes, it might simply be discarded from the state space. Indeed what the lemma above is essentially saying is that states in $R'$ are atypical states, artifacts that are only possible if we force the system to start from there, which are soon left, with no return possibility. Thus we will simply get rid of $R'$ and consider the restricted state space $\widehat{R} = R \backslash R'$, and then define the subset $R^* = \widehat{R} \backslash A$, which is the support of $(Z_t)_{t \geq 0}$ previous to extinction. We are now set to prove the main result of this section.

\begin{proposition} \label{prop:irreducibility}
	$(Z_t)_{t \geq 0}$ is irreducible on $R^*$.
\end{proposition}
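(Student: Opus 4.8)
\emph{Proof strategy.} The plan is to show that $R^*$ is a single communicating class: for any $z,z'\in R^*$ I will exhibit a finite chain of admissible transitions leading from $z$ to $z'$ that never enters $A$. Two structural remarks cut down the bookkeeping. First, each of the maps $\pi,\pi^*,\pi^\dagger_i$ sends $\widehat R$ into itself --- an efficient spike puts one neuron at level $0$ and pushes the nonempty levels $0,\dots,\theta-2$ up by one, while an inefficient spike or a loss of facilitation leaves every level occupancy below $\theta$ unchanged --- so from $R^*=\widehat R\setminus A$ the process can only reach $R^*\cup A$; since $A$ is absorbing, it is enough to connect states of $R^*$ inside $R^*$. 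Second, for fixed level occupancies the set $A$ is ``downward closed in facilitation'': if a state $z'$ lies in $A^c$, then every state with the same occupancies $z_{i,\bullet}$ but at least as many facilitated neurons at each level also lies in $A^c$ --- rewrite the defining inequality of $A_0$ as $z_{\theta,\bullet}+\sum_{j<\theta}z_{j,1}>\theta$ and those of $A_i$ for $i\ge 1$ in terms of $\sum_{j=i}^{\theta}z_{j,1}$, all of which only increase. Throughout I take $N\ge\theta+1$, since otherwise $R^*=\emptyset$ and there is nothing to prove.

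I would route everything through the hub $z^{\star}$, the all-facilitated state with level occupancies $(1,1,\dots,1,N-\theta)$, which a direct inspection of the definitions of the $A_i$ places in $R^*$. \emph{Step 1: every $z\in R^*$ reaches $z^{\star}$.} First I drive the system to an all-facilitated state: as long as some neuron is unfacilitated, if one sits at level $\theta$ an inefficient spike turns it into a facilitated neuron at level $0$ (this stays in $R^*$ --- the $A_i$-quantities are unchanged, and level $\theta$ cannot empty because $z\notin A_\theta$ keeps a facilitated neuron there), and otherwise efficient spikes (available because $z\notin A_\theta$) lift the highest unfacilitated neuron one level at a time until, within at most $\theta$ steps, it reaches level $\theta$; each such loop strictly decreases the number of unfacilitated neurons. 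Then, from an all-facilitated state with occupancies $(c_0,\dots,c_\theta)$, exactly $\theta$ efficient spikes produce the occupancies $(1,\dots,1,\sum_i c_i-\theta)=(1,\dots,1,N-\theta)$, that is $z^{\star}$; level $\theta$ stays occupied along the way (after $k$ spikes it holds $c_{\theta-k}+\dots+c_\theta-k\ge 1$ neurons), so by the downward-closedness remark every intermediate state is in $A^c$.

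\emph{Step 2: $z^{\star}$ reaches every $z'\in R^*$.} Let $(m_0,\dots,m_\theta)$ be the level occupancies of $z'$ (each $\ge 1$, summing to $N$). By downward-closedness it suffices to reach the all-facilitated state $w$ with those occupancies and then lose facilitation, in any order, at exactly the levels and multiplicities prescribed by $z'$: every intermediate state keeps the occupancies of $z'$ and dominates it in facilitation, hence stays in $A^c\cap\widehat R$. To build $w$ from $z^{\star}$ I would alternate two elementary moves on all-facilitated states: a \emph{load} --- a loss of facilitation at level $\theta$ followed immediately by an inefficient spike, with net effect $(c_0,\dots,c_\theta)\mapsto(c_0+1,c_1,\dots,c_{\theta-1},c_\theta-1)$ and remaining in $R^*$ (transient state included) whenever $c_\theta\ge 2$ --- and a \emph{rotate} --- one efficient spike, acting by $(c_0,\dots,c_\theta)\mapsto(1,c_0,\dots,c_{\theta-2},c_{\theta-1}+c_\theta-1)$. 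Running from $z^{\star}$ the schedule ``load $a_1$ times, rotate, load $a_2$ times, rotate, $\dots$, load $a_{\theta-1}$ times, rotate, load $a_\theta$ times'' with $a_{\theta-i}=m_i-1$ for $1\le i\le\theta-1$ and $a_\theta=m_0-1$ ends at occupancies $(m_0,m_1,\dots,m_{\theta-1},N-\theta-\sum_j a_j)$, and $N-\theta-\sum_j a_j=m_\theta$ by a one-line count; since $\sum_j a_j=N-\theta-m_\theta\le N-\theta-1$ one gets $c_\theta\ge 2$ before every load and $c_\theta\ge 1$ before every rotate, so the whole schedule stays in $R^*$. Chaining $z\to z^{\star}\to z'$ then proves the proposition.

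The hard part will be the verification underlying Step 2: checking that the load/rotate schedule lands exactly on the prescribed occupancy profile and, simultaneously, that neither a load, nor its transient intermediate state, nor a rotate ever touches $A$. The downward-closedness remark is the lever for the latter, but the explicit choice of the $a_k$ and the step-by-step control of the $A_i$-inequalities is where the real computation sits; everything else (the invariance of $\widehat R$, the elementary effects of the three transition maps, and the reachability arguments in Step 1) is routine once those facts are in place.
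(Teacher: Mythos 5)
Your proposal is correct and follows essentially the same route as the paper's proof: you funnel every pair of states through the same hub (the all-facilitated state with occupancies $(1,\dots,1,N-\theta)$, the paper's $x'$), build the target occupancy profile with the same elementary moves (a facilitation loss at level $\theta$ followed by an inefficient spike --- your ``load'' --- interleaved with efficient spikes --- your ``rotate''), and finish with the prescribed facilitation losses. The differences are cosmetic: your interleaved loop in Step 1 replaces the paper's fixed ``$\theta$ efficient spikes, then inefficient spikes, then $\theta$ efficient spikes'' sequence, and your downward-closedness remark neatly packages the avoidance-of-$A$ checks that the paper carries out by direct computation on the intermediate states $y^k$.
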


\begin{proof}
	Fix $x,y \in R^*$ and some $t>0$. We shall find a finite sequence of states $y^0, y^1, \ldots y^n \in R^*$ such that for $s>0$ we have $$\P \left(Z^x_s = y^0 \right) > 0,$$  
	$$\P \left(Z^{y^i}_s = y^{i+1} \right) > 0 \text{ for any } i \in \{0, \ldots k-1\},$$
	$$\text{and } \P \left(Z^{y^n}_s = y\right) > 0.$$
	
	Of course the exact value of $s$ is unimportant, as in our continuous time setting if this is true for some $s$ then it is true for any $s$. Once we've obtain this, the result evidently follows from Markov property: 
	
	\begin{align*}
	\P \left( Z^x_t = y\right) &\geq \P \left(Z^x_{\frac{t}{n+2}} = y^0, Z^{y^0}_{\frac{2t}{n+2}} = y^1,  \text{ } \ldots  \text{ }, Z^x_{\frac{(n+1)t}{n+2}} = y^n, Z^{y^n}_t = y\right) \\
	&= \P \left( Z^x_{\frac{t}{n+2}} = y^0\right) \P \left( Z^{y^0}_{\frac{t}{n+2}} = y^1\right) \ldots  \P \left(Z^{y^n}_{\frac{t}{n+2}} = y\right) > 0.
	\end{align*}

	\begin{center}
		\begin{figure}[htbp]
		\centering
		\includegraphics[width=\textwidth]{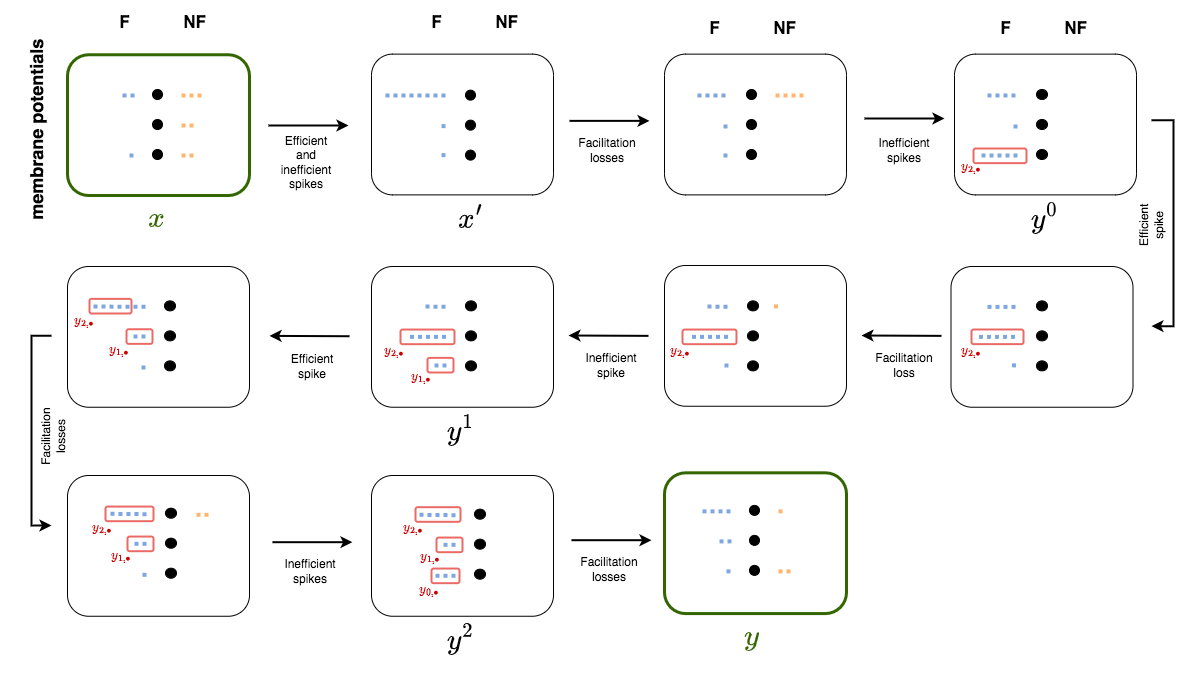}
		\caption{\label{fig:proof} Illustration of the idea behind the proof of Proposition \ref{prop:irreducibility} } in a minimal example with $N=10$ and $\theta=2$.
		\end{figure}
	\end{center}

	We now define our sequence. It is advisable to take a glance at Figure \ref{fig:proof} before reading the following text, which is a somewhat laborious (but necessary) translation of a nonetheless simple idea. For the sake of notation clarity the first state in the sequence will actually be written $x'$, the next state being $y_0$ and so on. $x'$ is the state in $R^*$ which is such that $x'_{\theta,1} = N - \theta$, and $x'_{i,1} = 1$ for all $i \in \{0, \ldots \theta-1\}$ (and of course all other coordinates to $0$). One can easily see that there is always a positive probability to reach $x'$ from $x$. Indeed suppose that, starting from $x$, the system undergoes exactly $\theta$ efficient spikes (and that nothing else happens in-between). This sequence of events has positive probability starting from $A^c$, moreover this has the effect of pushing all non-facilitated neurons to level $\theta$. Then suppose that all these non-facilitated neurons at level $\theta$ spike --- and thus become facilitated --- and that then the systems undergoes again exactly $\theta$ effective spikes (and that no facilitation loss happens in the mean time). This sequence of events is of positive probability and whatever $x$ is, the process ends up on $x'$.
	
	Now we define the other elements of the sequence. For any $k \in \{0, \ldots \theta-1\}$ the state $y^k$ is defined by: 
	$$y^k_{i,1} = y_{\theta-k+i,\bullet} \text{ for } i \in \{0, \ldots k\},$$
	$$y^k_{\theta,1} = N - \theta + k + 1 - \sum_{i=\theta-k}^\theta y_{i,\bullet},$$
	$$\text{and if } k < \theta-1 \text{ then } y^k_{i,1}=1 \text{ for } i \in \{k+1, \ldots \theta - 1\}.$$
	
	Moreover the penultimate state in our sequence (before $y$) is $y^\theta$, defined by $y^\theta_{i,1} = y_{i,\bullet}$ for $i \in \{0, \ldots \theta\}$. First we show that for any $k \in \{0, \ldots \theta\}$ the state $y^k$ is indeed outside $A$. In order to do this it is sufficient to show that the number of facilitated neurons at any level is greater or equal to $1$. The fact that $y \notin R'$ implies (by definition) that $y_{i,\bullet} \geq 1$ for any $i \in \{0, \ldots \theta\}$, and thus we have $y^\theta \notin A$ as well. Now pick some $k \in \{0, \ldots \theta-1\}$. For $i \in \{0, \ldots k\}$ we have $y^k_{i,1} \geq 1$ for the same reason, and for $i \in \{k+1, \ldots \theta-1\}$ we have $y^k_{i,1} = 1$. Finally to check $y^k_{\theta,1} \geq 1$ use again the fact that $y \notin R'$ and notice that therefore $$\sum_{i=\theta-k}^\theta y_{i,\bullet} = n - \sum_{i=0}^{\theta-k-1} y_{i,\bullet} \leq n - \theta + k.$$ This implies that $$y^k_{\theta,1} = n - \theta + k+1 - \sum_{i=\theta-k}^\theta y_{i,\bullet} \geq 1.$$ All levels are checked, and therefore $y^k \notin A$.  
	
	Let see how the system might reaches $y^0$ from $x'$. $y^0$ is the state such that $y^0_{\theta,1} = n - \theta - (y_{\theta,\bullet} - 1)$, $y^0_{1,1} = y^0_{2,1}= \ldots = y^0_{\theta-1,1} = 1$ and $y^0_{0,1} = y_{\theta,\bullet}$. Then it is clear that $y^0$ is reached from $x'$ if exactly $y_{\theta,\bullet} - 1$ neurons at level $\theta$ loose their facilitation and then emit a non-efficient spike one after the other. This happens with positive probability. The fact that there is enough facilitated neurons at level $\theta$ in $x'$ for this to happen follows from the fact that $y_{\theta,\bullet} = n - \sum_{i=0}^{\theta-1} y_{i,\bullet} \leq n - \theta$ (remember that $y \notin R'$, so that $y_{i,\bullet} \geq 1$ for any $i \in \{0, \ldots \theta-1\}$).
	
	Now let's se how we might go from $y^k$ to $y^{k+1}$ for any $k \in \{0, \ldots \theta-1\}$. We know that $y^k \notin A$ so that with positive probability there is an efficient spike occurring in the system before anything else. Let $z^k$ denote the state reached after this spike. The number of active neurons at level $\theta$ in the current state of the process cannot decrease, since one neuron was taken out (the spiking neuron) and at least one was added (exactly one for $k<\theta-1$ and possibly more if $k=\theta-1$), that is $z^k_{\theta,1} \geq y^k_{\theta,1}$. For the levels from $1$ to $\theta-1$ the values are simply pushed one-step upward, that is $z^k_{i,1} = y^k_{i-1,1}$ for any $i \in \{1, \ldots \theta-1\}$. Finally evidently $z^k_{0,1} = 1$. In other words we have $z^k_{i,1} = y^{k+1}_{i,1}$ for any $i \in \{1, \ldots \theta-1\}$ and to go from $z^k$ to $y^{k+1}$ it only remains to find a way to make it agree at level $0$ and $\theta$ as well. But this is easy using the same trick as in the previous paragraph: it suffices that the system undergoes exactly $y_{\theta-(k+1),\bullet}-1$ successive facilitation losses at level $\theta$ and then the same number of inefficient spikes. This obviously fixes the number of neurons at level $0$, then the number at level $\theta$ has no choice but to agree, as the different levels have to sum up to $n$. Moreover there is always enough facilitated neurons at level $\theta$ in $z^k$ for this to happen with positive probability, as shown by the following computation:
	
	\begin{align*}
		z^k_{\theta,\bullet} &= y^k_{\theta,\bullet} = n - \theta + k + 1 - \sum_{i=\theta-k}^\theta y_{i,\bullet}\\
		&= \sum_{i=0}^{\theta-k-1} y_{i,\bullet} + \sum_{i=\theta-k}^\theta y_{i,\bullet} - \theta + k + 1 - \sum_{i=\theta-k}^\theta y_{i,\bullet}\\
		&= \sum_{i=0}^{\theta-(k+1)} y_{i,\bullet} - \theta + k + 1\\
		&= y_{\theta - (k+1),\bullet} +  \sum_{i=0}^{\theta-(k+1)-1} y_{i,\bullet} - \theta + k + 1\\
		&\geq y_{\theta - (k+1),\bullet}.
	\end{align*}
	
	To obtain le last inequality we've used the fact that, as $y \notin R'$, the elements in the sum need to be greater or equal to $1$.
	
	It only remains to go from $y^\theta$ to $y$, but this is very easy: this will happen if there is exactly $y_{0,0}$ facilitation losse(s) at level $0$ in the process starting from $y^\theta$, then $y_{1,0}$ facilitation losse(s) at level $1$ and so on, with no spike in the mean time. This happens with positive probability and the proof is over. 

\end{proof}

\section{Quasi-stationary distribution}

\label{sec:QSD}

As a consequence of the irreducibility proven in the previous section, classical results\footnote{good references on the subject are \cite{darroch}, \cite{meleard:2012} and section 4.6 in \cite{kijima:97}} guarantee that there exists a unique probability measure $\mu$ supported by $R^*$, called the \emph{quasi-stationary distribution} of $(Z_t)_{t \geq 0}$, which is such that for any $z \in R^*$ and any $t \geq 0$: $$\P \left( Z^\mu_t = z \ | \ Z^\mu_t \notin A \right) = \mu (z).$$

Furthermore, not only the theory guarantees the existence and uniqueness of such probability measure, but it also provides a direct method to actually compute it. In order to explain how this is done we consider the infinitesimal generator $Q$ of $(Z_t)_{t \geq 0}$. Suppose we re-indexed the states by mapping the elements in $A$ into $\{1, \ldots, |A|\}$ and the ones of $R^*$ into $\{|A| + 1, \ldots, |A| + |R^*|\}$. Then the matrix $Q$ looks like
\begin{equation*}
Q =
\begin{pmatrix}
\textbf{A} & \textbf{O}	\\
\textbf{R} & \textbf{T}
\end{pmatrix},
\end{equation*}
where the matrix \textbf{A} is the sub-generator corresponding to the transitions occurring inside $A$, \textbf{T} is the matrix corresponding to the transitions occurring inside $R^*$, \textbf{R} is the matrix corresponding to the transitions occurring from $R^*$ to $A$, and $\textbf{O}$ is a matrix full of $0$ with $|A|$ rows and $|R^*|$ columns. As explained earlier we might as well identify $A$ to a single absorbing state and hence write
\begin{equation*}
	Q =
	\begin{pmatrix}
		0 & \textbf{0}	\\
		\textbf{r} & \textbf{T}
	\end{pmatrix},
\end{equation*}
where $\textbf{0}$ designates the null row vector of length $|R^*|$ and $\textbf{r}$ is a non-null column vector, of length $|R^*|$ as well. Then, following \cite{kijima:97} (pages 214 and 215), the appropriate version of Perron-Frobenius theorem ensures the existence of a left-eigenvector $\mu \in [0,1]^{|R^*|}$ associated to the Perron-Frobenius eigenvalue $\alpha_{PF}$:
\begin{equation} \label{eq:pfeigen}
\mu \textbf{T} = \alpha_{PF} \mu.
\end{equation}

We know moreover that $\alpha_{PF} \in \R$ and that it is the largest eigenvalue in the real part, that is $\text{Re}(\alpha_{PF}) > \text{Re}(\alpha)$ for any eigenvalue $\alpha \neq \alpha_{PF}$. On the other hand we also know that $\mu$ is unique up to a multiplicative constant, so that there exists only one vector satisfying both (\ref{eq:pfeigen}) and 
\begin{equation} \label{eq:normalizationeigenpf}
	\mu \textbf{1} = 1,
\end{equation} where $\textbf{1}$ denotes the column vector of length $|R^*|$ with only $1$'s. These facts clearly suggest a simple procedure to obtain a particular QSD:

\begin{enumerate}
	\item first compute the eigenvalues of the matrix \textbf{T} and keep only the one that is maximal in the real part,
	\item then obtain the left eigenvector associated to this Perron-Frobenius eigenvalue and normalize it so that it satisfies (\ref{eq:normalizationeigenpf}).
\end{enumerate}

However, as $n$ and $\theta$ grow, the computational cost of using such method for our model rapidly explodes, so that we ought to design a more efficient approximate method. This is actually the subject of Section \ref{sec:approx}. Now, writing $\textbf{T}(t)$ for the (lossy) transition matrix associated to the transitions occurring in $R^*$ up to time $t$ we have
$$ \textbf{T}(t) = \sum_{k=0}^\infty \frac{\textbf{T}^k t^k}{k!},$$ so that applying (\ref{eq:pfeigen}) and denoting $\gamma = - \alpha_{PF}$ gives
\begin{equation}\label{eq:PFtoGamma}
\mu \textbf{T}(t) = \sum_{k=0}^\infty \frac{\left(\alpha_{PF} t\right)^k}{k!} \cdot \mu = e^{-\gamma t} \mu.
\end{equation}

Then taking the sum of all the elements of both vectors on the left-hand side and on the right-hand side respectively in the equation above gives that for any $t \geq 0$ 
\begin{equation} \label{eq:exttime}
	\P \left(Z^\mu_t \notin A \right) = e^{-\gamma t}.
\end{equation}
In words, the time of extinction of the system when the initial state is chosen according to the QSD has an exponential distribution of some parameter $\gamma$. As $\gamma$ is simply $-\alpha_{PF}$, a by-product of the two-steps procedure defined earlier is that we immediately obtain the rate of the extinction time when the system start from the QSD. Moreover, a classical result states that not only $\mu$ is quasi-stationary, but it is also the unique Yaglom limit, meaning that for any states $x$ and $z$ in $R^*$ the following holds\footnote{See Theorem 4.27 in \cite{kijima:97}.}
\begin{equation} \label{eq:quasilimit}	
	\lim_{t \rightarrow \infty} \P \left( Z^x = z\ \ | \ Z^x_t \notin A\right) = \mu(z).
\end{equation}
In other words, whatever the distribution of the initial state, there is always \emph{relaxation} toward the QSD for the unabsorbed process.

This is good news, as one of our goal is to establish the metastable nature of our system, which requires that we show that the time of extinction, starting from any given state in $R^*$, is exponentially distributed. But thanks to (\ref{eq:exttime}) and (\ref{eq:quasilimit}) this will be established at the condition that we show that the relaxation time toward the QSD is negligible relative to the extinction time. This last point will be studied by numerical means.

\section{Solution for $N$ small}

\label{sec:nsmall}

A concrete example of an explicit QSD construction (Sec. \ref{sec:QSD}) for the model of Sec. \ref{sec:aggregated-process} is presented next. Since the size of model state space grows very rapidly with $N$ and $\theta$, we consider here rather small values for these parameters, namely $N=5$ and $\theta=1$.

\subsection{The Q matrix as a graph}
The state space has only 56 elements and all the states can be displayed as shown on Fig. \ref{fig:ntwN5T1}. On this figure the states are explicitly represented as $z^T$ (columns correspond to membrane potential values: 0 for the first column and 1 for the second; rows correspond to the synaptic facilitation value: 0 for the first row and 1 for the second). Each element of $z^T$ contains the ``headcounts'': how many of the five neurons of the network have the corresponding membrane potential and synaptic facilitation state. The thickness of the arrows represents the type of transition: synaptic facilitation loss for thin arrows and spike for thick ones. The states belonging to $R^*$ are shown with a light blue background on Fig. \ref{fig:ntwN5T1}. Each such state has at least one arrow leading to it from another ``blue'' state and at least one arrow leaving it for another ``blue'' state.  
\begin{center}
  \begin{figure}
    \includegraphics[width=1.0\textwidth]{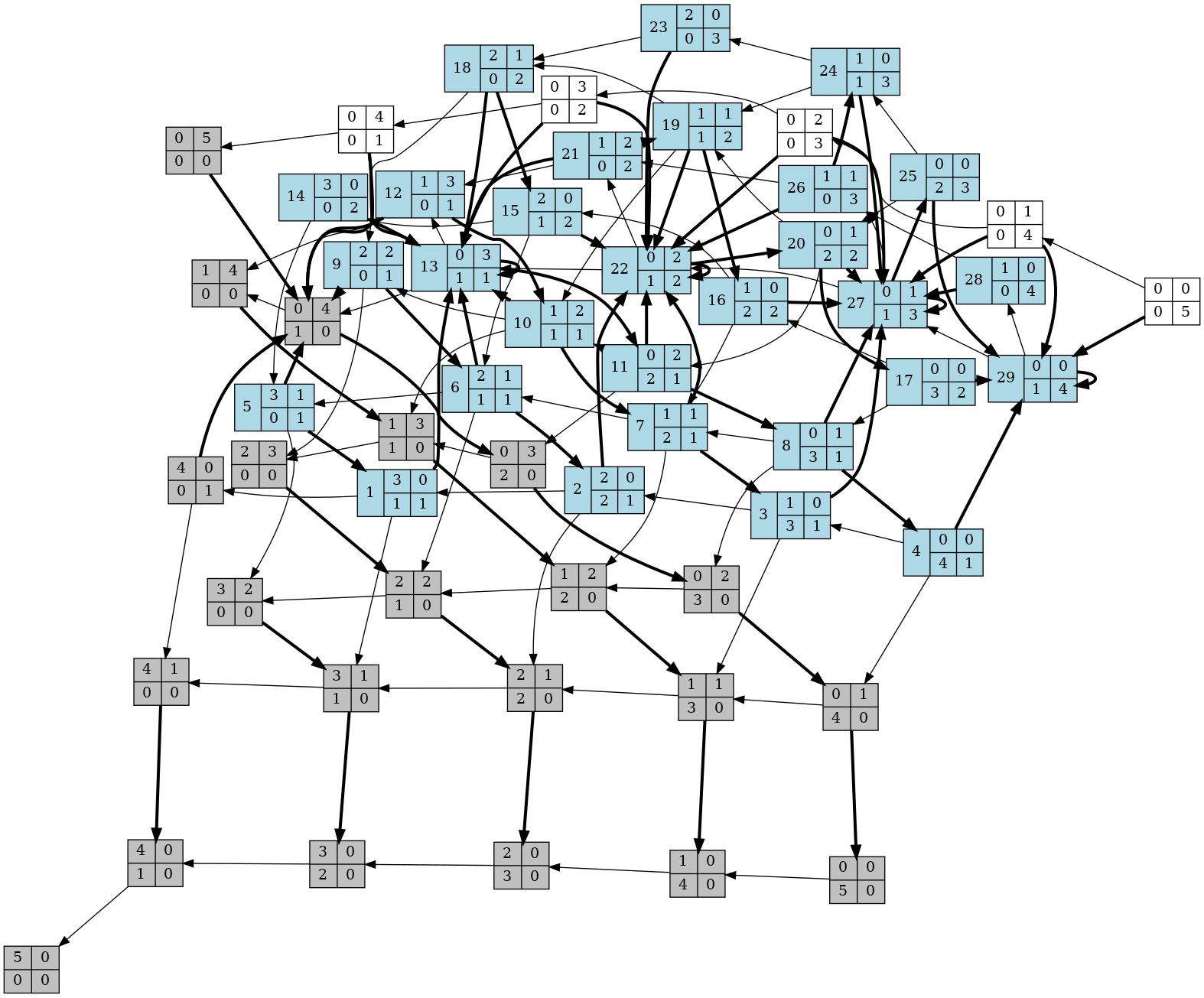}
    \caption{\label{fig:ntwN5T1}A network with 5 neurons and $\theta$ = 1. Configurations that belong to set $A$ appear in gray filled boxes, the ones that belong to set $R^*$ appear in light blue filled boxes with the index of the configuration on the left (in order to read the next matrix), while the ones belonging to set $R'\backslash A$ appear in white filled boxes. Transitions due to a spike appear as thick arrows, while transitions due to a synaptic facilitation loss appear as thin arrows.}
  \end{figure}
\end{center}

\pagebreak
\subsection{The \textbf{T} matrix}
Matrix \textbf{T} of Sec. \ref{sec:QSD} can be here explicitly represented as follows (the row and column indices correspond to the numbers of the ``blue'' states on Fig. \ref{fig:ntwN5T1}):
\begin{center}
  \tiny
\begin{tabular}{ccccccccccccccccccccccccccccc}
$\bullet$ & . & . & . & . & . & . & . & . & . & . & . & \textcolor{blue}{1} & . & . & . & . & . & . & . & . & . & . & . & . & . & . & . & . \\ 
\textcolor{orange}{2} & $\bullet$ & . & . & . & . & . & . & . & . & . & . & . & . & . & . & . & . & . & . & . & \textcolor{blue}{1} & . & . & . & . & . & . & . \\ 
. & \textcolor{orange}{3} & $\bullet$ & . & . & . & . & . & . & . & . & . & . & . & . & . & . & . & . & . & . & . & . & . & . & . & \textcolor{blue}{1} & . & . \\ 
. & . & \textcolor{orange}{4} & $\bullet$ & . & . & . & . & . & . & . & . & . & . & . & . & . & . & . & . & . & . & . & . & . & . & . & . & \textcolor{blue}{1} \\ 
\textcolor{blue}{1} & . & . & . & $\bullet$ & . & . & . & . & . & . & . & . & . & . & . & . & . & . & . & . & . & . & . & . & . & . & . & . \\ 
. & \textcolor{blue}{1} & . & . & \textcolor{orange}{1} & $\bullet$ & . & . & . & . & . & . & \textcolor{blue}{1} & . & . & . & . & . & . & . & . & . & . & . & . & . & . & . & . \\ 
. & . & \textcolor{blue}{1} & . & . & \textcolor{orange}{2} & $\bullet$ & . & . & . & . & . & . & . & . & . & . & . & . & . & . & \textcolor{blue}{1} & . & . & . & . & . & . & . \\ 
. & . & . & \textcolor{blue}{1} & . & . & \textcolor{orange}{3} & $\bullet$ & . & . & . & . & . & . & . & . & . & . & . & . & . & . & . & . & . & . & \textcolor{blue}{1} & . & . \\ 
. & . & . & . & . & \textcolor{blue}{2} & . & . & $\bullet$ & . & . & . & . & . & . & . & . & . & . & . & . & . & . & . & . & . & . & . & . \\ 
. & . & . & . & . & . & \textcolor{blue}{2} & . & \textcolor{orange}{1} & $\bullet$ & . & . & \textcolor{blue}{1} & . & . & . & . & . & . & . & . & . & . & . & . & . & . & . & . \\ 
. & . & . & . & . & . & . & \textcolor{blue}{2} & . & \textcolor{orange}{2} & $\bullet$ & . & . & . & . & . & . & . & . & . & . & \textcolor{blue}{1} & . & . & . & . & . & . & . \\ 
. & . & . & . & . & . & . & . & . & \textcolor{blue}{3} & . & $\bullet$ & . & . & . & . & . & . & . & . & . & . & . & . & . & . & . & . & . \\ 
. & . & . & . & . & . & . & . & . & . & \textcolor{blue}{3} & \textcolor{orange}{1} & $\bullet$ & . & . & . & . & . & . & . & . & . & . & . & . & . & . & . & . \\ 
. & . & . & . & \textcolor{orange}{2} & . & . & . & . & . & . & . & \textcolor{blue}{2} & $\bullet$ & . & . & . & . & . & . & . & . & . & . & . & . & . & . & . \\ 
. & . & . & . & . & \textcolor{orange}{2} & . & . & . & . & . & . & . & \textcolor{orange}{1} & $\bullet$ & . & . & . & . & . & . & \textcolor{blue}{2} & . & . & . & . & . & . & . \\ 
. & . & . & . & . & . & \textcolor{orange}{2} & . & . & . & . & . & . & . & \textcolor{orange}{2} & $\bullet$ & . & . & . & . & . & . & . & . & . & . & \textcolor{blue}{2} & . & . \\ 
. & . & . & . & . & . & . & \textcolor{orange}{2} & . & . & . & . & . & . & . & \textcolor{orange}{3} & $\bullet$ & . & . & . & . & . & . & . & . & . & . & . & \textcolor{blue}{2} \\ 
. & . & . & . & . & . & . & . & \textcolor{orange}{2} & . & . & . & \textcolor{blue}{2} & . & \textcolor{blue}{1} & . & . & $\bullet$ & . & . & . & . & . & . & . & . & . & . & . \\ 
. & . & . & . & . & . & . & . & . & \textcolor{orange}{2} & . & . & . & . & . & \textcolor{blue}{1} & . & \textcolor{orange}{1} & $\bullet$ & . & . & \textcolor{blue}{2} & . & . & . & . & . & . & . \\ 
. & . & . & . & . & . & . & . & . & . & \textcolor{orange}{2} & . & . & . & . & . & \textcolor{blue}{1} & . & \textcolor{orange}{2} & $\bullet$ & . & . & . & . & . & . & \textcolor{blue}{2} & . & . \\ 
. & . & . & . & . & . & . & . & . & . & . & \textcolor{orange}{2} & \textcolor{blue}{2} & . & . & . & . & . & \textcolor{blue}{2} & . & $\bullet$ & . & . & . & . & . & . & . & . \\ 
. & . & . & . & . & . & . & . & . & . & . & . & \textcolor{orange}{2} & . & . & . & . & . & . & \textcolor{blue}{2} & \textcolor{orange}{1} & $\bullet$ & . & . & . & . & . & . & . \\ 
. & . & . & . & . & . & . & . & . & . & . & . & . & . & . & . & . & \textcolor{orange}{3} & . & . & . & \textcolor{blue}{3} & $\bullet$ & . & . & . & . & . & . \\ 
. & . & . & . & . & . & . & . & . & . & . & . & . & . & . & . & . & . & \textcolor{orange}{3} & . & . & . & \textcolor{orange}{1} & $\bullet$ & . & . & \textcolor{blue}{3} & . & . \\ 
. & . & . & . & . & . & . & . & . & . & . & . & . & . & . & . & . & . & . & \textcolor{orange}{3} & . & . & . & \textcolor{orange}{2} & $\bullet$ & . & . & . & \textcolor{blue}{3} \\ 
. & . & . & . & . & . & . & . & . & . & . & . & . & . & . & . & . & . & . & . & \textcolor{orange}{3} & \textcolor{blue}{3} & . & \textcolor{blue}{1} & . & $\bullet$ & . & . & . \\ 
. & . & . & . & . & . & . & . & . & . & . & . & . & . & . & . & . & . & . & . & . & \textcolor{orange}{3} & . & . & \textcolor{blue}{1} & \textcolor{orange}{1} & $\bullet$ & . & . \\ 
. & . & . & . & . & . & . & . & . & . & . & . & . & . & . & . & . & . & . & . & . & . & . & . & . & \textcolor{orange}{4} & \textcolor{blue}{4} & $\bullet$ & . \\ 
. & . & . & . & . & . & . & . & . & . & . & . & . & . & . & . & . & . & . & . & . & . & . & . & . & . & \textcolor{orange}{4} & \textcolor{orange}{1} & $\bullet$ \\ 
\end{tabular}
\normalsize

\end{center}  
The zeros are not explicitly written but replace by a "$\bullet$". The factor with which $\lambda$ should be multiplied appear in \textcolor{orange}{orange}, while the factor with which $\beta$ should be multiplied appear in \textcolor{blue}{blue}. We see for instance that state 29 (last row) can make a transition to state 28 (last row and penultimate column) by a \emph{single} synaptic facilitation loss or to state 27 (last row and antepenultimate column) by one among 4 possible synaptic facilitation losses. These two transitions correspond to what is displayed on Fig. \ref{fig:ntwN5T1}. The actual values of the diagonal are given next:
\begin{center}
  \small
\begin{tabular}{cc | cc | cc}
index & rate & index & rate & index & rate\\ \hline 
1 & -\textcolor{orange}{2$\lambda$} -\textcolor{blue}{$\beta$} & 11 & -\textcolor{orange}{3$\lambda$} -\textcolor{blue}{3$\beta$} & 21 & -\textcolor{orange}{2$\lambda$} -\textcolor{blue}{4$\beta$}\\ 
2 & -\textcolor{orange}{3$\lambda$} -\textcolor{blue}{$\beta$} & 12 & -\textcolor{orange}{$\lambda$} -\textcolor{blue}{4$\beta$} & 22 & -\textcolor{orange}{3$\lambda$} -\textcolor{blue}{2$\beta$}\\ 
3 & -\textcolor{orange}{4$\lambda$} -\textcolor{blue}{$\beta$} & 13 & -\textcolor{orange}{2$\lambda$} -\textcolor{blue}{3$\beta$} & 23 & -\textcolor{orange}{3$\lambda$} -\textcolor{blue}{3$\beta$}\\ 
4 & -\textcolor{orange}{5$\lambda$} -\textcolor{blue}{$\beta$} & 14 & -\textcolor{orange}{2$\lambda$} -\textcolor{blue}{2$\beta$} & 24 & -\textcolor{orange}{4$\lambda$} -\textcolor{blue}{3$\beta$}\\ 
5 & -\textcolor{orange}{$\lambda$} -\textcolor{blue}{2$\beta$} & 15 & -\textcolor{orange}{3$\lambda$} -\textcolor{blue}{2$\beta$} & 25 & -\textcolor{orange}{5$\lambda$} -\textcolor{blue}{3$\beta$}\\ 
6 & -\textcolor{orange}{2$\lambda$} -\textcolor{blue}{2$\beta$} & 16 & -\textcolor{orange}{4$\lambda$} -\textcolor{blue}{2$\beta$} & 26 & -\textcolor{orange}{3$\lambda$} -\textcolor{blue}{4$\beta$}\\ 
7 & -\textcolor{orange}{3$\lambda$} -\textcolor{blue}{2$\beta$} & 17 & -\textcolor{orange}{5$\lambda$} -\textcolor{blue}{2$\beta$} & 27 & -\textcolor{orange}{4$\lambda$} -\textcolor{blue}{$\beta$}\\ 
8 & -\textcolor{orange}{4$\lambda$} -\textcolor{blue}{2$\beta$} & 18 & -\textcolor{orange}{2$\lambda$} -\textcolor{blue}{3$\beta$} & 28 & -\textcolor{orange}{4$\lambda$} -\textcolor{blue}{4$\beta$}\\ 
9 & -\textcolor{orange}{$\lambda$} -\textcolor{blue}{3$\beta$} & 19 & -\textcolor{orange}{3$\lambda$} -\textcolor{blue}{3$\beta$} & 29 & -\textcolor{orange}{5$\lambda$}\\ 
10 & -\textcolor{orange}{2$\lambda$} -\textcolor{blue}{3$\beta$} & 20 & -\textcolor{orange}{4$\lambda$} -\textcolor{blue}{3$\beta$} &  & \\ 
\end{tabular}
\normalsize

\end{center}
We see that matrix \textbf{T} is very sparse, having at most 3 positive non-diagonal elements per row.

\subsection{Case $N=5$, $\theta=1$, $\beta=10$, $\lambda=4$}
\label{sec:n5t1b10l4}
We compare now the theoretical/direct results provided by solving Eq. \eqref{eq:pfeigen} and \eqref{eq:normalizationeigenpf} with a straightforward simulation of the Markov process (see Sec.~\ref{sec:numerical-details} for implementation details). $10^5$ independent replicates were simulated, all starting with each of the 5 neurons in the susceptible state (U=1), with a facilitated synapse (F=1). The simulation of a given replicate was stopped when one of the $A$ states was reached or, if it survived till then, when time 4 was reached. Fig. \ref{fig:sec5_n5t1b10l4_fig_2} shows this comparison. The left part of the figure displays, on a log ordinate scale, the empirical survival function (a replicate is ``dead'' when it reaches one of the $A$ states), as well as the theoretical one provided by Eq. \eqref{eq:exttime}. The right part of the figure displays the empirical mean headcounts for each state that a single neuron can have---namely: $(U=0,F=0)$ (orange), $(U=0,F=1)$ (blue), $(U=1,F=0)$ (red) and $(U=1,F=1)$ (black)---, together with the expected values computed from the QSD (horizontal lines):
\[\mu_{i,j} = \int_{R^*} z_{i,j} d\mu = \sum_{z \in R^*}z_{i,j} \, \mu(z)\, , \quad i,j \in \{0,1\} \, ,\]
where $\mu(\bullet)$ is the QSD (Eq. \eqref{eq:pfeigen} and \eqref{eq:normalizationeigenpf}). Notice the \emph{fast relaxation} of the empirical means towards the corresponding expected values; remember that each replicate starts with the 5 neurons being susceptible ($U=1$) with a facilitated synapse, a state that is not in $R^*$ (see Fig.~\ref{fig:ntwN5T1}, rightmost state). Sec.~\ref{sec:additional-comparisons} presents additional simulations showing that the case we just illustrated is at least reasonably representative. 
\begin{center}
  \centering
  \begin{figure}[H]
    \includegraphics[width=0.8\textwidth]{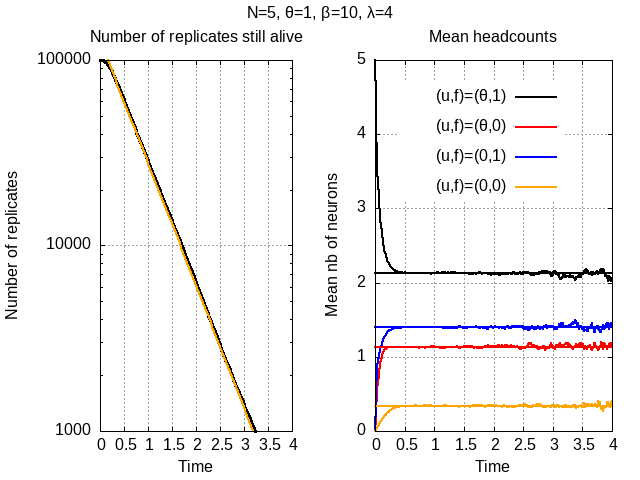}
    \caption{\label{fig:sec5_n5t1b10l4_fig_2}Stochastic simulations and direct solution comparison for a network with $N=5$, $\theta=1$, $\beta=10$, $\lambda=4$. $10^5$ replicates were used for the simulations. Left, the empirical survival function (black), together with the theoretical straight line (orange) whose slope is given by $\gamma$ in Eq. \eqref{eq:pfeigen}, \eqref{eq:PFtoGamma} and \eqref{eq:exttime}; right, the empirical mean headcounts of the four states (see legend), together with the expected values computed from the QSD (Eq. \eqref{eq:normalizationeigenpf}) shown as horizontal lines.}
  \end{figure}
\end{center}
Our stochastic simulations can be used for computing the empirical distributions, among the $R^*$ states, of the alive replicates at different times. This is what is shown on the upper part of Fig.~\ref{fig:emp-and-theo-qsd}. There a ``line plus glyph'' display is used instead of the vertical bars that would have been be more appropriate for a \emph{single} distribution. When looking at these graphs, the reader should therefore keep in mind that the only meaningful values are the ones indicated by the glyphs. The latter are located at integer values of the abscissa and correspond to the $R^*$ states indexed on Fig.~\ref{fig:ntwN5T1}. The lines in-between the glyphs are here to help seeing the \emph{profile} of the distributions: the distributions at the four different times are (essentially) a scaled copy of a \emph{fundamental} one, the QSD. The decaying values of the empirical distributions with time is just the consequence of replicates leaving the $R^*$ set for the $A$ set (Fig.~\ref{fig:sec5_n5t1b10l4_fig_2}, left). The profile identity of the empirical distributions is what is illustrated on the bottom part of Fig.~\ref{fig:emp-and-theo-qsd}: here the empirical distributions have been normalized to sum to 1, and the QSD as been added (thick red line).     
\begin{center}
  \centering
  \begin{figure}[H]
    \includegraphics[width=0.8\textwidth]{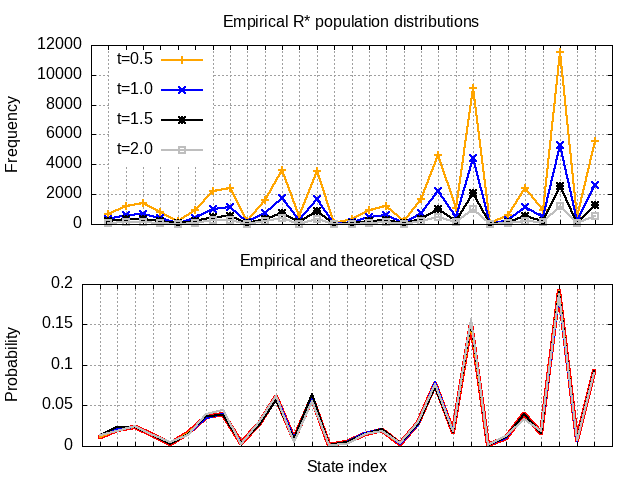}
    \caption{\label{fig:emp-and-theo-qsd}Stochastic simulations of a network with $N=5$, $\theta=1$, $\beta=10$, $\lambda=4$. $10^5$ replicates were used for the simulations. Top, empirical distributions of the still alive replicates among the 29 states of $R^*$ at four different times: 0.5, 1.0, 1.5 and 2.0. Bottom, scaled versions of the above distributions (they sum to 1) and theoretical QSD (thick red line).}
  \end{figure}
\end{center}
The key feature justifying the name of the \emph{quasi-stationary distribution} (QSD) is illustrated here: considering the \emph{alive replicates only}, the fraction of them in each state of $R^*$ \emph{does not change with time}. 

\section{Solution for $N$ big}

\label{sec:nbig}

When $N$ and $\theta$ grow, the number of states in $R^*$, $\binom{N + 2 \theta + 1}{2 \theta + 1}$, becomes very quickly, very large. Finding the Perron-Frobenius eigenvalue $\alpha_{PF}$ (Eq.~\ref{eq:pfeigen}) and the QSD (Eq.~\ref{eq:pfeigen} and \ref{eq:normalizationeigenpf}), as we did in the previous section, is not doable anymore and we have to stick to stochastic simulations. The direct comparison of the theoretical and simulation based results presented in the last section does nevertheless give us confidence in our stochastic simulation code --- we can add to that the simplicity of this code of course. We consider here networks with a fixed ratio $N/\theta=5$, $N=50$ and $N=500$. Like in the previous section, $10^5$ replicates are used and all replicates start with all the neurons in the susceptible state with a facilitated synapse. Fig.~\ref{fig:emp-survival-time} shows the increase of the survival time with the network size ($N$ at constant $N/\theta$ ratio). Looking at the graph we see that the level of $10^5 \, \exp -1$ (the level at one time constant) is reached at time $\sim 0.5$ with $N=5$, $\sim 1.5$ with $N=50$ and $\sim 3.8$ with $N=500$.     
\begin{center}
  \centering
  \begin{figure}[H]
    \includegraphics[width=0.8\textwidth]{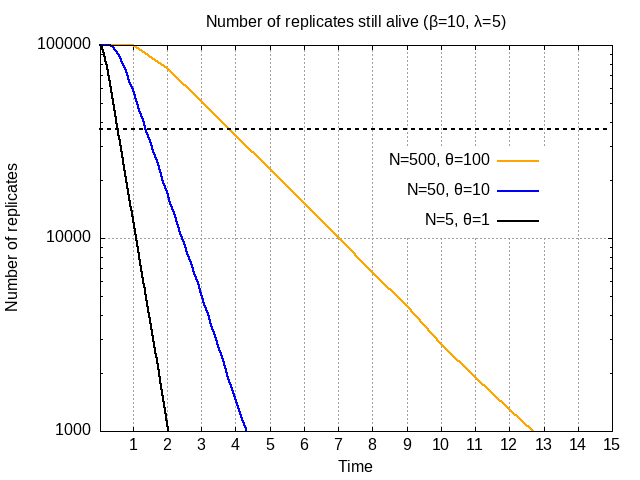}
    \caption{\label{fig:emp-survival-time}Empirical survival function from stochastic simulations of networks with $N=5, \theta=1$ (black), $N=50, \theta=10$ (blue), $N=500, \theta=100$ (orange) and $\beta=10$, $\lambda=5$. $10^5$ replicates were used for each parameter set. Each replicate with each parameter set started with all the neurons susceptible and all synapses facilitated. Dotted horizontal line, level at one time constant.}
  \end{figure}
\end{center}

On the other hand, as shown by Fig.~\ref{fig:relaxation}, the relaxation from the initial state, that is not in $R^*$, towards the QSD value is ``fast'' ($< 1$). It does not depend strongly on the network size (the empirical mean displayed on the figure for $N=50$ has been scaled to facilitate the comparison). This figure only shows the evolution of the mean number of susceptible neurons with a facilitated synapse in order to reduce clutter/improve readability\footnote{Showing all empirical means would require 22 curves for $\theta=10$ and 102 for $\theta=100$.}. Seeing this figure, the reader should abstain from jumping to the conclusion that the mean  QSD value, $\mu_{\theta,1}$, is proportional to the network size. As the approximate solution developed in the next section will make clear, this is only ``true'' when $\beta \, \mu_{\theta,1} \gg \lambda$ and this order of magnitude difference becomes more accurate as $N$ grows. Looking at the middle right panel of Fig.~\ref{fig:sec5_big_fig} in Sec.~\ref{sec:additional-comparisons} the reader can see that for $N=5$, the QSD mean value is closer to 2 (and would therefore be at 200 using the proper scaling on Fig.~\ref{fig:relaxation}). 
\begin{center}
  \centering
  \begin{figure}[H]
    \includegraphics[width=0.8\textwidth]{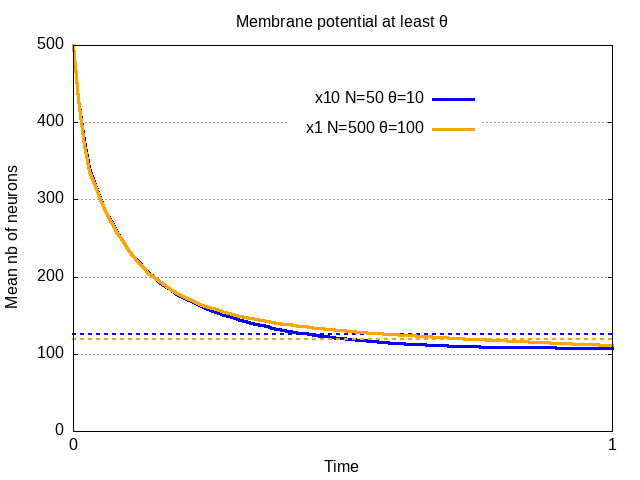}
    \caption{\label{fig:relaxation}Scaled mean number of susceptible neurons with a facilitated synapse as a function of time. Parameter used: $N=50, \theta=10$ (blue), $N=500, \theta=100$ (orange) and $\beta=10$, $\lambda=5$. $10^5$ replicates were used for each parameter set. Each replicate with each parameter set started with all the neurons susceptible and all synapses facilitated. The dotted lines show the approximate QSD mean values. They are the numerical solutions of the implicit Eq. \eqref{eq:conditionB} obtained in the next section.}
  \end{figure}
\end{center}

\section{Approximation of the first moment of the Quasi-stationary distribution}

\label{sec:approx}

\subsection{What is approximated?}
\label{sec:what-is-approximated}
The method presented in Section \ref{sec:QSD} to obtain the QSD of the system using Perron-Frobenius theorem becomes rapidly intractable as the number of neurons in the system grows. To detect the emergence of the QSD along a particular trajectory it might therefore be useful to design some less costly approximate method. In order to do so we would like to compute the expected values w.r.t. the QSD of the number of elements in each possible combinations of membrane potential and facilitation states. Remember that, for any $(i,j) \in \{0, \ldots \theta\} \times \{0,1\}$, $\mu_{i,j}$ denotes the QSD mean value: $$ \mu_{i,j} = \int_{R^*} z_{i,j} d \mu.$$   

We adopt the same notation as for elements of $R$ and for any $i \in \{0, \ldots \theta\}$ we write 
\begin{equation}\label{eq:sigma-def}
	\mu_{i,\bullet} = \mu_{i,0} + \mu_{i,1}.
\end{equation}

In the sequel we write the infinitesimal generator of $(Z_t)_{t \geq 0}$ as a functional operator $\mathcal{L}$, defined by:

\begin{equation} \label{eq:infinitesimalG}
	\mathcal{L} f(z) = \beta z_{\theta,1}  \big( f(\pi (z)) - f(z) \big) \ + \ \beta z_{\theta,0}  \big( f(\pi^* (z)) - f(z) \big) \ + \  \sum_{i=0}^\theta \lambda z_{i,1} \left( f(\pi^\dagger_i (z)) - f(z) \right)
\end{equation} where $f$ is any suitable test function and $z \in R$. In order to get the infinitesimal rate of change of the number of neuron at coordinate $(i,j) \in \{0, \ldots \theta\} \times \{0,1\}$ in $(Z_t)_{t \geq 0}$, starting from some state $z \in R$, we consider the functions from $R$ to $\{0, \ldots N\}$ defined by $f_{i,j} : z \mapsto f_{i,j}(z) = z_{i,j}$. We perform the computation for $f_{0,0}$. The second term in the right hand side of (\ref{eq:infinitesimalG}) equals $0$, as $f_{0,0}(\pi^*(z)) - f_{0,0}(z) = 0$ --- an inefficient spike doesn't changes the number of neurons at the coordinate $(0,0)$. In the third term, all the elements of the sum equals $0$ except for $i=0$, where $f_{0,0}(\pi^\dagger_0 (z)) - f_{0,0}(z) = 1$ --- as a facilitation loss affects the number of non-facilitated neurons at level $0$ only if it occurs on a neuron of null membrane potential, and when it does so it increases this number by one unit. For the first term, we have $f_{0,0}(\pi(z)) = 0$, and therefore $f_{0,0}(\pi(z)) - f_{0,0}(z) = -z_{0,0}$. Finally the complete expression is 
\begin{equation} \label{eq:infG00}
	\mathcal{L}f_{0,0}(z) = - \beta z_{\theta,1} z_{0,0} + \lambda z_{0,1}.
\end{equation}

Now, if $\mu$ were a stationary distribution for $(Z_t)_{t \geq 0}$, by standard results on Markov processes\footnote{See for example theorem 3.37 in \cite{liggettMarkovProcesses}.} we should have $\int \mathcal{L}f_{i,j} d \mu = 0$. Of course, strictly speaking, $\mu$ is not a stationary distribution for $(Z_t)_{t \geq 0}$, as it is only a quasi-stationary distribution, or in other words a stationary distribution for the modification of $(Z_t)_{t \geq 0}$ in which we've prohibited any transition to the absorbing region $A$. We will assume nonetheless here that $\mu$ is close enough to be a stationary distribution for $(Z_t)_{t \geq 0}$ and try to solve $\int \mathcal{L}f_{i,j} d \mu = 0$ for $\mu_{i,j}$. Using (\ref{eq:infG00}), this gives $$0 = - \beta \int z_{\theta,1} z_{0,0} d \mu + \lambda \mu_{0,1}.$$

We proceed to further approximations by assuming that $\int z_{\theta,1} z_{0,0} d \mu \approx \mu_{\theta,1} \mu_{0,0}$, and we obtain the following equation

\begin{equation}\label{eq:E_mu_0}
	0 = -\beta \mu_{\theta,1} \mu_{0,0} + \lambda \mu_{0,1}
\end{equation}

\vspace{0.4 cm}

\subsection{Getting the approximate expected values of the QSD}
\label{sec:getting-the-approximate-expected-values-of-the-qsd}
Proceeding by similar arguments for the others $f_{i,j}$ we obtain the following equations.

\begin{align}
	0 &= -\beta \, \mu_{\theta , 1} \, \mu_{0,1} - \lambda \, \mu_{0,1} + \beta \, \mu_{\theta,\bullet}\, . \label{eq:E_mu_1} 
\end{align}
For $i \in \{1,\ldots,\theta-1\}$:
\begin{align}
	0 &= -\beta \, \mu_{\theta , 1} \, (\mu_{i,0} - \mu_{i - 1,0}) + \lambda \, \mu_{i ,1} \, , \label{eq:E_mu_2i}\\
	0 &= -\beta \, \mu_{\theta, 1} \, (\mu_{i, 1} - \mu_{i - 1,1}) - \lambda \, \mu_{i , 1} \, . \label{eq:E_mu_2ip1}
\end{align}
And for $\mu_{\theta,0}$ and $\mu_{\theta, 1}$:
\begin{align}
	0 &= -\beta \, \mu_{\theta,0} + \beta \, \mu_{\theta, 1} \, \mu_{\theta - 1,0} + \lambda \, \mu_{\theta , 1} \, , \label{eq:E_mu_2theta}\\
	0 &= -\beta \, \mu_{\theta, 1} + \beta \, \mu_{\theta, 1} \, \mu_{\theta - 1,1} - \lambda \, \mu_{\theta, 1} \, . \label{eq:E_mu_2thetap1}
\end{align}

We introduce:
\begin{equation}
	\rho_i = \frac{\mu_{ i , 1}}{\mu_{i,\bullet}}\, , \quad i = 0,\ldots,\theta\, . \label{eq:rho-def} 
\end{equation}
$\mu_{i,\bullet}$ is the total number of neurons whose membrane potential is $i$ in the network, regardless of their synaptic facilitation status. $\rho_i$ is the faction of neurons with a facilitated synapse among the $\mu_{i,\bullet}$ neurons whose membrane potential is $i$. We have obviously: $\mu_{i , 1} = \mu_{i,\bullet} \, \rho_i$ and $\mu_{i,0} = \mu_{i,\bullet} \, (1-\rho_i)$.
If we add Eq. \ref{eq:E_mu_0} and \ref{eq:E_mu_1} we get with Eq. \ref{eq:sigma-def} and \ref{eq:rho-def}:
\[0 = - \beta \mu_{\theta,\bullet} \, (\rho_{\theta} \mu_{0,\bullet} - 1) \,.\]
The QSD must have a non null value of $\mu_{\theta,1}$ and therefore of $\mu_{\theta,\bullet}$ otherwise it would be identical to the absorbing state. In a more useful way we can then write the last equation as:
\begin{equation}
	\rho_{\theta} \mu_{0,\bullet} = 1 \, . \label{eq:sigma0-condition} 
\end{equation}
If we add Eq. \ref{eq:E_mu_2i} and \ref{eq:E_mu_2ip1} we get with Eq. \ref{eq:sigma-def}:
\begin{equation}
	\mu_{i,\bullet} = \mu_{i-1,\bullet} \quad i=1,\ldots,\theta-1 \, . \label{eq:sigmaI-condition} 
\end{equation}
If we add Eq. \ref{eq:E_mu_2theta} and \ref{eq:E_mu_2thetap1} we get with Eq. \ref{eq:sigma-def}:
\begin{equation}
	\rho_{\theta} \mu_{\theta-1,\bullet} = 1 \, . \label{eq:sigmaTheta-condition} 
\end{equation}
Eq. \ref{eq:sigmaI-condition} implies that we can define $\kappa$ such that:
\begin{equation}
	\kappa = \mu_{0,\bullet} = \mu_{1,\bullet} = \cdots = \mu_{\theta-1,\bullet} \, . \label{eq:kappa-def}
\end{equation}
In words, all the states below $\theta$ are equally populated. Since $N = \sum_{i=0}^{\theta} \mu_{i,\bullet}$, the QSD must satisfy:
\begin{equation}
	N = \kappa \theta + \mu_{\theta,\bullet} \, . \label{eq:N-kappa-theta-condition}
\end{equation}
Combining Eq. \ref{eq:sigma0-condition} or \ref{eq:sigmaTheta-condition} with Eq. \ref{eq:kappa-def}, we get:
\begin{equation}
	\rho_{\theta} = \mu_{\theta,1}/\mu_{\theta,\bullet} = 1 / \kappa \, . \label{eq:rho_theta-condition}
\end{equation}
Using Eq. \ref{eq:sigma-def}, \ref{eq:rho-def}, \ref{eq:kappa-def} and \ref{eq:rho_theta-condition}, we can rewrite Eq. \ref{eq:E_mu_1} as:
\[0 = -\beta \mu_{\theta,1} \rho_0 \kappa - \lambda \rho_0 \kappa + \beta \mu_{\theta,\bullet}\, ,\]
or
\[\rho_0 = \frac{\beta \mu_{\theta,\bullet}}{(\lambda + \beta \mu_{\theta,1}) \kappa}\, ,\]
but $1/\kappa = \mu_{\theta,1}/\mu_{\theta,\bullet}$ (Eq. \ref{eq:rho_theta-condition}), therefore:
\begin{equation}
	\rho_0 = \frac{\beta \mu_{\theta,1}}{ \lambda + \beta \mu_{\theta,1}} \, . \label{eq:rho_0}
\end{equation}
Using Eq. \ref{eq:sigma-def}, \ref{eq:rho-def}, \ref{eq:kappa-def} and \ref{eq:rho_theta-condition}, we can rewrite Eq. \ref{eq:E_mu_2ip1} as:
\[0 = -\beta \mu_{\theta,1} (\rho_i -  \rho_{i-1}) \kappa - \lambda \kappa \rho_i \, , \quad i = 1,\ldots,\theta-1\, ,\]
leading to:
\begin{equation}
	\rho_i = \frac{\beta \mu_{\theta,1}}{ \lambda + \beta \mu_{\theta,1}} \rho_{i-1}\, ,\quad i = 1,\ldots,\theta-1 \, . \label{eq:rho_i}
\end{equation}
Combining Eq. \ref{eq:rho_0} and Eq. \ref{eq:rho_i}, we get:
\begin{equation}
	\rho_i = \left(\frac{\beta \mu_{\theta,1}}{ \lambda + \beta \mu_{\theta,1}}\right)^{i+1}\, ,\quad i = 0,\ldots,\theta-1 \, . \label{eq:rho_iB}
\end{equation}
Now using Eq. \ref{eq:sigma-def}, \ref{eq:rho-def}, \ref{eq:kappa-def} and \ref{eq:rho_theta-condition}, we can rewrite Eq. \ref{eq:E_mu_2thetap1} as:
\begin{equation}
	\rho_{\theta-1} = \frac{\lambda + \beta}{\kappa \beta} \, . \label{eq:rho_theta_1}
\end{equation}
Eq. \ref{eq:rho_iB} and \ref{eq:rho_theta_1} together lead to:
\begin{equation}
	\left(\frac{\beta \mu_{\theta,1}}{\lambda + \beta \mu_{\theta,1}}\right)^{\theta} = \frac{\lambda + \beta}{\kappa \beta}\, . \label{eq:conditionA}
\end{equation}
But Eq. \ref{eq:N-kappa-theta-condition} and \ref{eq:rho_theta-condition} imply that:
\begin{equation}
	\frac{1}{\kappa} = \frac{\theta + \mu_{\theta,1}}{N} \, , \label{eq:one-over-kappa}
\end{equation}
leading to:
\begin{equation}
	\mu_{\theta,1} = \frac{N \beta}{\lambda + \beta} \, \left(\frac{ \beta \mu_{\theta,1}}{\lambda + \beta \mu_{\theta,1} }\right)^{\theta} - \theta\, . \label{eq:conditionB}
\end{equation}
$\mu_{\theta,1}$ must be solution of Eq. \ref{eq:conditionB}. If we find such a $\mu_{\theta,1}$, we immediately get $\kappa$ (Eq. \ref{eq:one-over-kappa}) and $\mu_{\theta,\bullet}$, the number of neurons at or above the spiking threshold $\theta$  (Eq. \ref{eq:rho_theta-condition}). With Eq. \ref{eq:rho_iB}, we get the successive $\mu_{i,0}$ and $\mu_{i,1}$ for $i = 0, \ldots, \theta-1$, that is, the whole description of the QSD in terms of its expected values. A comparison between this approximation for $\mu_{\theta,1}$ and empirical values computed from stochastic simulations is shown on Fig.~\ref{fig:relaxation} in the previous section. 

\subsection{Remarks}

The $\mu_{\theta,1}$ solution(s) of Eq.~\ref{eq:conditionB} depend on $\beta$ and $\lambda$ only through their ratio: $\eta = \lambda / \beta$. So multiplying $\beta$ and $\lambda$ by the same factor does not change the solution(s) of Eq.~\ref{eq:conditionB}.

We can consider limiting cases to check if the argument we just developed makes sense.

A network without facilitation loss ($\lambda{}=0$) is an instance of a limiting case. For such a network, as soon as a neuron has spiked, its synapse gets facilitated and remains facilitated forever. Then Eq.~\ref{eq:conditionB} leads to, $\mu_{\theta,1} = N -\theta{}$; Eq.~\ref{eq:one-over-kappa} gives, $\kappa = 1$ and Eq.~\ref{eq:rho_0} implies that, $\mu_{\theta,0} = 0$. The use of Eq.~\ref{eq:kappa-def} together with Eq.~\ref{eq:rho_iB}, yields: $\mu_{i,0} = 0$ and $\mu_{i,1}=1$ for $i = 0,\ldots,\theta{}-1$. We have therefore one neuron (with a facilitated synapse) at each possible membrane potential value below threshold and all the other neurons (with a facilitated synapse) at threshold. This is the state $x'$ defined in Sec.~\ref{sec:irreducibility} and in Fig.~\ref{fig:proof} and this is precisely what we expect from the argument developed in that section.

If there is no threshold ($\theta=0$), every neuron of the network has two accessible states: facilitated synapse and un-facilitated synapse; it goes from the first to the second with rate $\beta$ and from the second to the first with rate $\lambda$. Elementary Markov process results tell us that the probability of finding a neuron in the facilitated state is $\frac{\beta}{\beta{}+\lambda{}}$ leading to $\mu_{0,1} = \frac{N \beta{}}{\beta{}+\lambda{}}$ as given by Eq.~\ref{eq:conditionB} when $\theta{} = 0$.

\subsection{Back-of-the-envelope calculation leading to $\rho_{\theta}$}

We can get an approximate value of $\rho_{\theta} = \frac{\theta + \mu_{\theta,1}}{N}$ (Eq.~\ref{eq:sigma0-condition} and \ref{eq:one-over-kappa}) with an even bolder approach. Quantity $\beta \mu_{\theta,1}$ is the rate of efficient spikes; spike making all the other neurons climb one step on the membrane potential ladder. A neuron that just spiked has a null membrane potential and a facilitated synapse. That neuron needs $\theta$ efficient spikes in order to be able to spike again and that will require on average $\theta/(\beta \mu_{\theta,1})$ time units. Once the neuron membrane potential reaches $\theta$, that neuron can spike again and will do so with a rate $\beta$. The question is: what is the ``probability'' for the synapse of that neuron to be still facilitated when its next spike comes? In other words, what is the value of $\rho_{\theta}$? Facilitation is lost with rate $\lambda$, that probability is therefore:
\[\int_0^{\infty} e^{-\lambda \left(\theta/(\beta \mu_{\theta,1})+t\right)} \beta e^{-\beta t} dt = \frac{\beta}{\lambda + \beta} \exp \left(-\frac{\lambda \theta}{\beta \mu_{\theta,1}}\right)\, .\]
Now assuming that $\beta \mu_{\theta,1} \gg \lambda$, writing
\[\exp \left(-\frac{\lambda \theta}{\beta \mu_{\theta,1}}\right) = \left(\frac{1}{exp\frac{\lambda }{\beta \mu_{\theta,1}}}\right)^{\theta}\]
and using a first order Taylor expansion for $\exp(\bullet)$ leads to:
\[\rho_{\theta} = \frac{\beta}{\lambda + \beta} \, \left(\frac{ \beta \mu_{\theta,1}}{\lambda + \beta \mu_{\theta,1} }\right)^{\theta}\]
that is what we already obtained (we just have to combine Eq.~\ref{eq:conditionB}, \ref{eq:one-over-kappa} and \ref{eq:sigma0-condition}).

\subsection{How good is this approximation}

We can first compare the first moments of the QSD derived in Sec.~\ref{sec:getting-the-approximate-expected-values-of-the-qsd} with the exact values provided by the complete QSD of Sec.~\ref{sec:QSD} for the case illustrated in Sec.~\ref{sec:n5t1b10l4}: $N=5$, $\theta=1$, $\beta=10$, $\lambda=4$. The exact results are (also shown as horizontal lines on the right panel of Fig.~\ref{fig:emp-and-theo-qsd}):
\[\mu_{0,0}=0.342, \; \mu_{0,1}=1.398, \; \mu_{1,0}=1.135, \; \mu_{1,1}=2.125\]
The approximate values are:
\[\mu_{0,0}=0.285, \; \mu_{0,1}=1.400, \; \mu_{1,0}=1.347, \; \mu_{1,1}=1.968\]

Next, we only look at $\mu_{\theta,1}$ for networks with $\beta=10$, $\lambda=5$, a ratio $N/\theta=5$ (first table) or $N/\theta=10$ (second table) and a sequence of $(N,\theta)$ pairs. We compare the output of stochastic simulation with the approximate calculation. Our numerical solution of the implicit Eq.~\ref{eq:conditionB} gives us 4 significant digits after the decimal point (Sec.~\ref{sec:solving-the-implicit-eq}). The simulation results are shown with their associated standard error. $10^5$ replicates were used for the first 2 entries of each table, $10^4$ for the third and $5\times 10^3$ for the fourth. Three time units were simulated and the mean values computed at time 2 are reported in the tables. We used fewer replicates with the larger $N$ values in order to save time. With $N/\theta=5$ we observe:
\begin{center}
  \begin{tabular}{c | c | c}
    $(N,\theta)$ & Approx. & Simulation \\
    \hline
    (50,10) & 12.563 & 10.76 $\pm$ 0.05 \\
    (100,20) & 24.526 & 20.20 $\pm$ 0.06 \\
    (500,100) & 119.738 & 101.4 $\pm$ 0.5 \\
    (1000,200) & 238.661 & 212.3 $\pm$ 0.9 \\
    \hline
  \end{tabular}
\end{center}
We see that in this setting the relative approximation error is about 20\%. If we change slightly the setting, $N/\theta = 10$, in order to get higher spiking rates, we observe:
\begin{center}
  \begin{tabular}{c | c | c}
    $(N,\theta)$ & Approx. & Simulation \\
    \hline
    (50,5) & 25.216 & 24.91 $\pm$ 0.02 \\
    (100,10) & 50.400 & 50.14 $\pm$ 0.02 \\
    (500,50) & 251.866 & 251.8 $\pm$ 0.2 \\
    (1000,100) & 503.700 & 503.6 $\pm$ 0.3 \\
    \hline
  \end{tabular}
\end{center}
The standard errors are smaller because: i) fewer replicates reach the ``dead set'' $A$ per time units---the statistics are therefore computed with more observations---; ii) when the spiking rate is larger, the QSD is less spread around its first moments (its standard deviation is smaller) and our key approximation (Sec.~\ref{sec:what-is-approximated}) is more accurate. We see that when the network spiking rate is large enough (because $N/\theta$ is large as shown here, but that also holds if $\lambda$ is ``small''), the approximate solution becomes very close to the truth. 

\section{Conclusion}

We proposed a simple stochastic model of strongly interconnected neurons exhibiting synaptic facilitation, and suggested that the metastable properties of such system might give an interesting explanation of the mechanism behind the sustained activity observed experimentally in networks of neurons involved in short-term memory.

The metastability of this system was established using arguments based on the the notion of quasi-stationarity. These arguments are partially based on simulations and are therefore non-rigorous from the mathematical point of view. Nonetheless it allows a very simple and intuitive comprehension of the phenomenon of metastability: the system started in any given state rapidly converges to the Yaglom limit, which happens to be the QSD as well, and, once there, it evolves in an almost stationary manner for an exponentially distributed random time, due to classical results about QSD. The fact that the relaxation toward the QSD is much faster than the extinction of the system implies that the extinction time is exponentially distributed as well --- asymptotically with respect to the number of neurons --- starting from any state out of the absorbing region, i.e. genuine metastability. Establishing that fact is actually the only missing piece to make the argument rigorous, and it shall be noticed that such arguments would then apply to a large class of Markov processes (at least any irreducible Markovian system having an absorbing state) --- including the famous Contact process for example, which has been rigorously proven to exhibit metastability by various means \cite{schonmann,cassandro,schonmann:1988,mountford:1993}, but with technical complications making the understanding of the fundamental reasons behind the phenomenon difficult to interpret.

\pagebreak
\appendix
\section{Some numerical details}
\label{sec:numerical-details}
This section provides a brief outline of the numerical methods implemented to simulate our model, to ``solve the QSD'' (Eq.~\ref{eq:pfeigen} and \ref{eq:normalizationeigenpf}) and the implicit equation \eqref{eq:conditionB}. The actual codes and their comprehensive documentation can be found on our \texttt{GitLab} repository\footnote{\href{https://gitlab.com/c_pouzat/metastability-in-a-system-of-spiking-neurons-with-synaptic-plasticity}{https://gitlab.com/c\_pouzat/metastability-in-a-system-of-spiking-neurons-with-synaptic-plasticity}}.

\subsection{Simulations}
\label{sec:simulations}
The simulation of the Markov processes defined in Sec.~\ref{sec:formal-definition-as-a-continuous-time-markov-chain} and \ref{sec:aggregated-process} is straightforward once the intensity matrix Q is known. Our \texttt{Fortran} code implements algorithm 2.7.2 of \cite[p. 80]{rubinstein.kroese:16}. This algorithm is just the constructive definition of a Markov process of \cite[Sec. 8.3.2, p. 243]{iosifescu:07}. Our simulation code requires a uniform (pseudo)random number generator; we used the implementation of the \texttt{xoshiro128plus} generator \cite{blackman.vigna:21} provided by Jannis Teunissen's \texttt{rng\_fortran}\footnote{\href{https://github.com/jannisteunissen/rng_fortran}{https://github.com/jannisteunissen/rng\_fortran}.}.
\subsection{QSD computation}
\label{sec:QSD-computation}
The eigenvalues and left eigenvectors of $\mathbf{T}$ (Eq.~\ref{eq:pfeigen} and \ref{eq:normalizationeigenpf}) were obtained with subroutine \texttt{dgeev} of LAPACK-3.11.0\footnote{\href{https://www.netlib.org/lapack/}{https://www.netlib.org/lapack/}.}. Since the LAPACK code is written in FORTRAN 77, we used the modern Fortran interface provided on \texttt{GitHub} by the \emph{Numerical Algorithms Group} (NAG)\footnote{\href{https://github.com/numericalalgorithmsgroup/LAPACK_Examples}{https://github.com/numericalalgorithmsgroup/LAPACK\_Examples}.}.

\subsection{Solving the implicit Eq.~\eqref{eq:conditionB}}
\label{sec:solving-the-implicit-eq}
We used a modification of Arjen Markus robust Newton method\footnote{\texttt{robust\_newton.f90}: \href{https://flibs.sourceforge.net/robust_newton.f90}{https://flibs.sourceforge.net/robust\_newton.f90}.} \cite{markus:12}. This code explicitly brackets the roots of the equation:
\[\mu_{\theta,1} + \theta - \frac{N \beta}{\lambda + \beta} \, \left(\frac{ \beta \mu_{\theta,1}}{\lambda + \beta \mu_{\theta,1} }\right)^{\theta} \]
and provides therefore an upper bound of its error.

\section{Additional comparisons between theoretical QSD and simulations}
\label{sec:additional-comparisons}
We consider again the setting of Sec.~\ref{sec:nsmall}, with $N=5$, $\theta=1$, $\beta=10$. As a complement to the case $\lambda=4$ of Sec.~\ref{sec:n5t1b10l4}, we illustrate here cases where $\lambda$ take values 3, 5 and 8. The equivalent of Fig.~\ref{fig:sec5_n5t1b10l4_fig_2} is then obtained and shown on Fig.~\ref{fig:sec5_big_fig}. The basic features are the same as what is seen on Fig.~\ref{fig:sec5_n5t1b10l4_fig_2}. Not surprisingly, the mean survival depends strongly on $\lambda$; the smaller the latter, the larger the former. The mean number of neurons in the susceptible state with a facilitated synapse (black curves on the right panels) exhibits a similar trend. The ``noisy'' aspect of the empirical means displayed on  Fig.~\ref{fig:sec5_big_fig} bottom, right ($\lambda=8$) for a time larger than 1.25 are due to the small sample size: few replicates are still alive past that time.
\begin{center}
  \centering
  \begin{figure}[H]
    \includegraphics[width=0.8\textwidth]{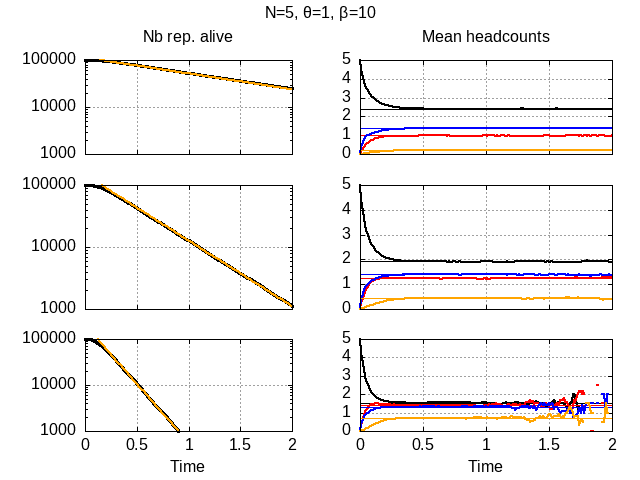}
    \caption{\label{fig:sec5_big_fig}Stochastic simulations and direct solution comparison for networks with $N=5$, $\theta=1$, $\beta=10$ and $\lambda=3$ (top), $\lambda=5$ (middle), $\lambda=8$ (bottom). $10^5$ replicates were used for the simulations. Left, the empirical survival function (black), together with the theoretical straight line (orange) whose slope is given by $\gamma$ in Eq. \eqref{eq:pfeigen}, \eqref{eq:PFtoGamma} and \eqref{eq:exttime}; right, the empirical mean headcounts of the four states (see legend), together with the expected values computed from the QSD (Eq. \eqref{eq:normalizationeigenpf}) shown as horizontal lines.}
  \end{figure}
\end{center}

\pagebreak
\bibliographystyle{plain}
\bibliography{Andre_Pouzat_MSSNSP_3}

\end{document}